\renewcommand{\Re}{\operatorname{Re}}
\renewcommand{\Im}{\operatorname{Im}}
\newtheorem{theo}{{\sc Theorem}}[section]
\newtheorem{lem}[theo]{{\sc Lemma}}
\newtheorem{prop}[theo]{{\sc Proposition}}
\newenvironment{rem}{\medskip\noindent{\it Remark:\/} }{\medskip}
\definecolor{BabyBlue}{rgb}{.5373,.8118,.9412}
\definecolor{BabyRed}{rgb}{1.0, .4667, .4745}
\def\blfootnote{\gdef\@thefnmark{}\@footnotetext}
\title{Scaling Asymptotics of Wigner Distributions of Harmonic Oscillator Orbital Coherent States}
\author{Nicholas Lohr}
\date{\today}
\begin{document}
\blfootnote{The author was partially supported by NSF RTG grant DMS-1502632.}

\maketitle
\begin{abstract}
The main result of this article gives scaling asymptotics of the Wigner distributions $W_{\varphi_N^{\gamma},\varphi_N^{\gamma}}$ of isotropic harmonic oscillator \textit{orbital coherent states} $\varphi_N^{\gamma}$ concentrating along Hamiltonian orbits $\gamma$ in shrinking tubes around $\gamma$ in phase space. In particular, these Wigner distributions exhibit a \textit{hybrid} semi-classical scaling. That is, simultaneously, we have an \textit{Airy scaling} when the tube has radius $N^{-2/3}$ normal to the energy surface $\Sigma_E$, and a \textit{Gaussian scaling} when the tube has radius $N^{-1/2}$ tangent to $\Sigma_E$.
\end{abstract}
\tableofcontents
\section{Introduction}
Coherent states centered at points in phase space were introduced by Schr\"odinger \cite{S26} in 1926 as minimal uncertainty states. In 1963, Glauber \cite{G63}
extended coherent states to quantum electrodynamics. This paper studies \textit{orbital coherent states} centered on periodic classical Hamiltonian orbits. Orbital coherent states were first studied in 1989 for the Hydrogen atom in the papers of Gay et al. \cite{GDB89} and Nauenberg \cite{N89}, and more recently in papers by Klauder \cite{K96} and Villegas-Blas et al. \cite{VB96,TVB97}. In 1992, De Bi\`evre \cite{De92} studied orbital coherent states \eqref{eq:coherent} for the
two-dimensional isotropic harmonic oscillator:
\begin{equation}\label{eq:harmos}
\widehat{H}_{\hbar} \coloneqq -\frac{\hbar^2}{2}\Delta_{\mathbb{R}^2}+\frac{|x|^2}{2}.
\end{equation}
For $\gamma_0(t)\coloneqq \sqrt{E}(\cos t, 
\sin t,-\sin t,\cos t)$ in $T^*\mathbb{R}^2$ with fixed energy $E=\hbar (N+1)$, the $L^2(\mathbb{R}^2)$-normalized orbital coherent state associated to $\gamma_0$ is
\begin{equation}\label{eq:coherent}
 \varphi_N^{\gamma_0}(x_1,x_2)=\frac{(N+1)^{\frac{N+1}{2}}}{E^{\frac{N+1}{2}}\sqrt{\pi
    N!}}(x_1+ix_2)^Ne^{-\tfrac{(N+1)}{2E}|x|^2}.
\end{equation}
From this initial coherent state $\varphi_{N}^{\gamma_0}$, we define $\varphi_N^{\gamma}$ for a general Hamiltonian orbit $\gamma$ with the metaplectic representation (see \eqref{eq:genphi} of Section \ref{s:statements}).
This article concerns the Wigner distributions of these orbital coherent states.  We recall that in dimension 2, the Wigner distribution $W_{\varphi_N^{\gamma},\varphi_N^{\gamma}}:T^*\mathbb{R}^2 \to \mathbb{R}$ of $\varphi_N^{\gamma}$ is
defined by
\begin{align*}
W_{\varphi_N^{\gamma},\varphi_N^{\gamma}}(x,\xi) \coloneqq &\
 \frac{1}{4\pi^2\hbar^2}\int_{\mathbb{R}^2}\varphi_N^{\gamma}(x+\tfrac{v}{2})\overline{\varphi_N^{\gamma}(x-\tfrac{v}{2})}e^{-\frac{i}{\hbar
                                                             }\langle
                                         v ,
                                                             \xi
                                                             \rangle}\
                                                             dv
\end{align*}
(see \cite{W32}).
It is well-known that for fixed energy, the Wigner distributions tend in the weak
sense to a delta function along the orbit $\gamma$ as $N \to \infty$, i.e.
\begin{equation}\label{eq:ah}
W_{\varphi_N^{\gamma},\varphi_N^{\gamma}} \rightharpoonup \delta_{\gamma}
\end{equation}
(see \cite{De92,De93}). For completeness, we prove \eqref{eq:ah} in Proposition \ref{t:delta}. Our purpose is to study the
fine structure of this limit in small tubes around $\gamma$. The main result of this article gives Airy and Gaussian scaling asymptotics
of the Wigner distribution
$W_{\varphi_N^{\gamma},\varphi_N^{\gamma}}$ of
$\varphi_N^{\gamma} \in L^2(\mathbb{R}^2)$ (see Theorem
\ref{t:scale}). In this article, we prove all of the results in dimension 2 for notational convenience. These results generalize to dimension $d$ and the same arguments hold, as explained in Section \ref{s:generalize}.
\subsection{Statements of Results}\label{s:statements}
The orbital coherent states \eqref{eq:coherent} are eigenstates of \eqref{eq:harmos},
\begin{equation*}
\widehat{H}_{\hbar}\varphi_N^{\gamma_0}=E \varphi_N^{\gamma_0}
\end{equation*}
where $E=\hbar (N+1)$. The non-constant Hamiltonian orbits of the classical Hamiltonian $H(x,\xi)=\frac{1}{2}|\xi|^2+\frac{1}{2}|x|^2$ are great circles on the energy surface $\Sigma_E:=\{(x,\xi) : H(x,\xi)=E\}$. Modding out by the points on the same orbit, we see that the space of non-constant orbits is isomorphic to $S^3/S^1=\mathbb{C}\text{P}^1$. Since $\operatorname{SU}(2)$ acts transitively on $\mathbb{C}\text{P}^1$ by rotation, for any non-constant Hamiltonian orbit $\gamma$ there exists a $U \in \operatorname{SU}(2)$ such that $U \cdot \gamma_0=\gamma$. We define (up to a phase factor) the coherent state centered at the Hamiltonian orbit $\gamma$ by
\begin{equation}\label{eq:genphi}
\varphi_N^{\gamma} \coloneqq \mu(U)\varphi_N^{\gamma_0}
\end{equation}
where $\mu$ is the metaplectic representation on $L^2(\mathbb{R}^2)$
(see Chapter 4 of \cite{F89} for details on the metaplectic representation).
\par The first result of this article is a proof of \eqref{eq:ah}.
\begin{prop}\label{t:delta} Let $a:T^*\mathbb{R}^2 \to \mathbb{R}$ be a smooth function with exponential decay. Then in the limit $N\to \infty$ where $E=\hbar (N+1)$, we have
\begin{equation}\label{eq:me}
\int_{T^*\mathbb{R}^2}a(x,\xi)W_{\varphi_N^{\gamma},\varphi_N^{\gamma}}(x,\xi)\ dxd\xi \xrightarrow{N \to \infty} \frac{1}{2\pi}\int_0^{2\pi}a\big(\gamma(t)\big)dt.
\end{equation}
\end{prop}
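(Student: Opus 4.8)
The plan is to reduce to the base orbit $\gamma_0$ and then identify the weak-$*$ limit of $W_{\varphi_N^\gamma,\varphi_N^\gamma}$ as the unique flow-invariant probability measure on $\gamma$, exploiting that $\varphi_N^{\gamma_0}$ is a joint eigenstate of the quantized Hamiltonian $\widehat H_\hbar = \mathrm{Op}_\hbar^W(H)$ and the quantized angular momentum $\widehat L_\hbar = \mathrm{Op}_\hbar^W(x_1\xi_2-x_2\xi_1)$. First I would reduce to $\gamma=\gamma_0$: since $\varphi_N^\gamma = \mu(U)\varphi_N^{\gamma_0}$ and the Wigner distribution is metaplectically covariant, $W_{\varphi_N^\gamma,\varphi_N^\gamma}=W_{\varphi_N^{\gamma_0},\varphi_N^{\gamma_0}}\circ g_U^{-1}$, where $g_U\in\mathrm{Sp}(4,\mathbb R)$ is the linear symplectomorphism underlying $\mu(U)$; because $g_U$ preserves Liouville measure and satisfies $|g_U z|\asymp|z|$, the substitution $z\mapsto g_U z$ turns \eqref{eq:me} for $\gamma$ into the same assertion for $\gamma_0$ (using the test function $a\circ g_U$, still smooth with exponential decay), provided $g_U\circ\gamma_0=\gamma$ as time-parametrized orbits, which holds since $U\in\mathrm{SU}(2)$ commutes with the oscillator flow.

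For $\gamma_0$, I would write $\int a\,W_{\varphi_N^{\gamma_0},\varphi_N^{\gamma_0}} = \langle \mathrm{Op}_\hbar^W(a)\varphi_N^{\gamma_0},\varphi_N^{\gamma_0}\rangle$, and use $\|\varphi_N^{\gamma_0}\|_{L^2}=1$ with the Calder\'on--Vaillancourt estimate (uniform in $\hbar\in(0,1]$) to extract a subsequence along which, by the standard theory of semiclassical (Wigner) measures, $W_{\varphi_N^{\gamma_0},\varphi_N^{\gamma_0}}\rightharpoonup\mu\ge0$. Since $\varphi_N^{\gamma_0}$ is a bound state of the confining operator $\widehat H_\hbar$ at the fixed energy $E$, a non-escape-of-mass argument shows $\mu$ is a probability measure carried by the compact region $\{H\le E\}$. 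Now $\widehat H_\hbar\varphi_N^{\gamma_0}=E\varphi_N^{\gamma_0}$, $\widehat L_\hbar\varphi_N^{\gamma_0}=N\hbar\,\varphi_N^{\gamma_0}$ with $N\hbar\to E$, and $[\widehat H_\hbar,\widehat L_\hbar]=0$; semiclassical functional calculus gives $f(\widehat H_\hbar)g(\widehat L_\hbar)=\mathrm{Op}_\hbar^W\!\big((f\circ H)(g\circ L)\big)+O_{L^2\to L^2}(\hbar)$ for $f,g\in C_c^\infty(\mathbb R)$, whence $\int (f\circ H)(g\circ L)\,d\mu = f(E)g(E)$. Letting $f,g$ shrink to bumps at $E$ forces $\operatorname{supp}\mu\subseteq\{H=E\}\cap\{L=E\}$, and the elementary inequality $|x_1\xi_2-x_2\xi_1|\le|x|\,|\xi|\le\tfrac12(|x|^2+|\xi|^2)$, with equality precisely when $|x|=|\xi|=\sqrt E$ and $\xi$ is the positively oriented quarter-turn of $x$, identifies this set as exactly the orbit $\gamma_0$.

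To pin down $\mu$ on $\gamma_0$, I would invoke exact Egorov for the quadratic generator $\widehat H_\hbar$: $\mathrm{Op}_\hbar^W(a\circ\phi_t)=e^{-it\widehat H_\hbar/\hbar}\mathrm{Op}_\hbar^W(a)e^{it\widehat H_\hbar/\hbar}$, together with $e^{it\widehat H_\hbar/\hbar}\varphi_N^{\gamma_0}=e^{itE/\hbar}\varphi_N^{\gamma_0}$, so that $\int(a\circ\phi_t)\,W_{\varphi_N^{\gamma_0},\varphi_N^{\gamma_0}}=\int a\,W_{\varphi_N^{\gamma_0},\varphi_N^{\gamma_0}}$ for every $t$ and $N$; in the limit $\mu$ is $\phi_t$-invariant. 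A $\phi_t$-invariant probability measure carried by the single $2\pi$-periodic orbit $\gamma_0$ is necessarily $\frac1{2\pi}\int_0^{2\pi}\delta_{\gamma_0(t)}\,dt$, which is the right-hand side of \eqref{eq:me}. Since every subsequential limit is this same measure, the whole sequence converges; the passage from $a\in C_c^\infty$ to $a$ smooth with exponential decay is then immediate, writing $a=a\chi_R+a(1-\chi_R)$ and noting $\|\mathrm{Op}_\hbar^W(a(1-\chi_R))\|_{L^2\to L^2}=O(e^{-cR})$ uniformly in $N$.

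I expect the only genuinely delicate step to be the non-escape of mass, i.e.\ that the limit is a full probability measure rather than a sub-probability measure (or a higher-order distribution); the rest is soft. An alternative that bypasses this point --- and which dovetails with the quantitative Airy/Gaussian estimates in the body of the paper --- is to replace $W_{\varphi_N^{\gamma_0},\varphi_N^{\gamma_0}}$ by the manifestly non-negative, mass-one Husimi function, which differs from it by $O(\hbar)$ when tested against $a$, compute it explicitly from the Bargmann transform of the monomial $(x_1+ix_2)^N$, and read off both tightness and the limit directly by Laplace's method at the concentration set, namely the orbit $\gamma_0$.
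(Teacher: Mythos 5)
Your proof is correct, but it takes a genuinely different route from the paper. The paper plugs the explicit oscillatory--integral representation of $\int a\,W_{\varphi_N^{\gamma_0},\varphi_N^{\gamma_0}}$ (a $6$-dimensional integral over $(x,\xi,v)$ with complex phase $\Psi$) into a Bott--Morse stationary phase theorem: it identifies the critical manifold $C=\{(\sqrt{E}e^{i\theta},i\sqrt{E}e^{i\theta},0)\}$, which is the lift of $\gamma_0$ to $v=0$, computes $\det(-i\Psi''|_{N_\theta C})=8/E^3$ and the Riemannian measure $d\mu_C=\sqrt{2E}\,d\theta$, and reads off both the limit and an $O(1/N)$ rate. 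You instead work in the semiclassical-measure framework: extract a subsequential weak-$*$ limit $\mu$, establish tightness via the energy cutoff, then pin down $\operatorname{supp}\mu$ using the fact --- not used in the paper --- that $\varphi_N^{\gamma_0}$ is simultaneously an eigenstate of $\widehat{H}_\hbar$ (eigenvalue $E$) and $\widehat{L}_\hbar$ (eigenvalue $N\hbar\to E$), together with the elementary Cauchy--Schwarz identification $\{H=E\}\cap\{L=E\}=\gamma_0$; exact Egorov plus the oscillator's isochrony then forces $\mu$ to be the unique flow-invariant probability on the orbit. The reduction from general $\gamma$ to $\gamma_0$ via metaplectic covariance is the same as the paper's. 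The trade-off: the paper's computation is self-contained (modulo a standard Bott--Morse lemma) and yields a quantitative $O(1/N)$ remainder, which your soft compactness argument does not produce; your argument is shorter, clarifies \emph{why} the limit concentrates on a single orbit (joint $H$--$L$ localization), and ports with less effort to other completely integrable situations. Both your flagged concerns are handled correctly: the no-escape step is exactly the $\chi(\widehat H_\hbar)\varphi_N=\varphi_N$ functional-calculus argument you sketch, and the Husimi alternative you mention is essentially De Bi\`evre's original proof of \eqref{eq:deb} cited in the paper.
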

It was proved in \cite{De92} that given a smooth function $a:T^*\mathbb{R}^2 \to \mathbb{R}$ with exponential decay, in the limit $N\to \infty$ where $E=\hbar (N+1)$, we have
 \begin{equation}\label{eq:deb}
\langle
 \varphi_N^{\gamma},\operatorname{Op}_N^{aw}(a)[\varphi_N^{\gamma}]\rangle \xrightarrow{N \to \infty} \frac{1}{2\pi}\int_0^{2\pi}a\big(\gamma(t)\big)dt,
\end{equation}
 where $\operatorname{Op}_N^{aw}(a)$ is the anti-Wick quantization of $a$. Proposition \ref{t:delta} proves the analogous result for Weyl quantization, based on the well-known formula
\begin{equation}\label{eq:wig1}
\langle
 \varphi_N^{\gamma},\operatorname{Op}_N^{w}(a)[\varphi_N^{\gamma}]\rangle = \int_{T^*\mathbb{R}^2}a(x,\xi)W_{\varphi_N^{\gamma},\varphi_N^{\gamma}}(x,\xi)dxd\xi.
\end{equation}
The next result gives the pointwise asymptotics of
$W_{\varphi_N^{\gamma},\varphi_N^{\gamma}}(x,\xi)$ for $(x,\xi)$ on and off the orbit $\gamma$ (see Figure \ref{fig:pointwise} for a picture).
\begin{theo}\label{t:main2}
As $N \to \infty$, we have the following pointwise asymptotics:
\begin{itemize}
\item[(1)] If $(x,\xi)$ lies on the orbit, then we have
$$\frac{1}{N^{5/3}}W_{\varphi_N^{\gamma},\varphi_N^{\gamma}}(x,\xi)=\frac{1}{2^{1/3}\pi^2E^2}\operatorname{Ai}(0)+O(N^{-2/3})$$
where $\operatorname{Ai}$ is the Airy function defined in the Section \ref{s:airy}. In terms of the Gamma function, $\frac{1}{2^{1/3}\pi^2E^2}\operatorname{Ai}(0)=\frac{\Gamma(1/3)12^{1/3}}{4\pi^3E^2\sqrt{3}}$
\item[(2)] If $(x,\xi)$ lies on the orbit, then for $0<t< 1$ we have
$$\frac{1}{N^{5/3}}W_{\varphi_N^{\gamma},\varphi_N^{\gamma}}\big(t(x,\xi)\big)=O(N^{-1/6}).$$
\item[(3)] If $(x,\xi)\neq 0$ is not in the above case, we have
$$W_{\varphi_N^{\gamma},\varphi_N^{\gamma}}(x,\xi) =
O(N^{-\infty}).$$
\item[(4)] If $(x,\xi)=0$, then the Wigner distribution is explicitly $$W_{\varphi_N^{\gamma},\varphi_N^{\gamma}}(x,\xi)=\frac{(-1)^N}{\pi^2E^2}(N+1)^2.$$
\end{itemize}
\end{theo}
\begin{figure}[!htbp]
    \centering
    \subfloat[\centering The Wigner distribution when $N=10$ and $E=1$]{{\includegraphics[width=8cm]{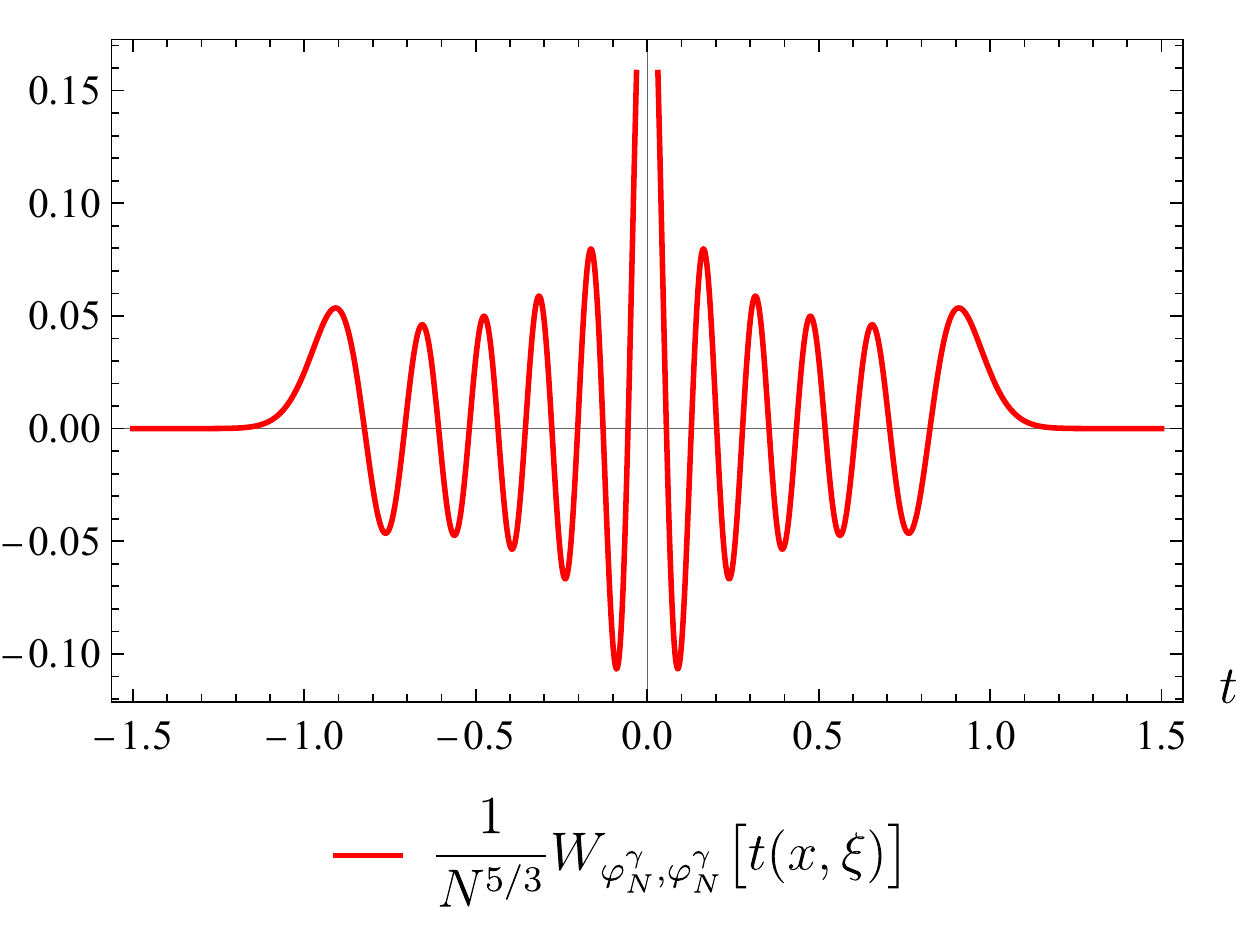} }}%
    \subfloat[\centering The Wigner distribution on the plane containing $\gamma$ when $N=10$ and $E=1$]{{\includegraphics[width=8cm]{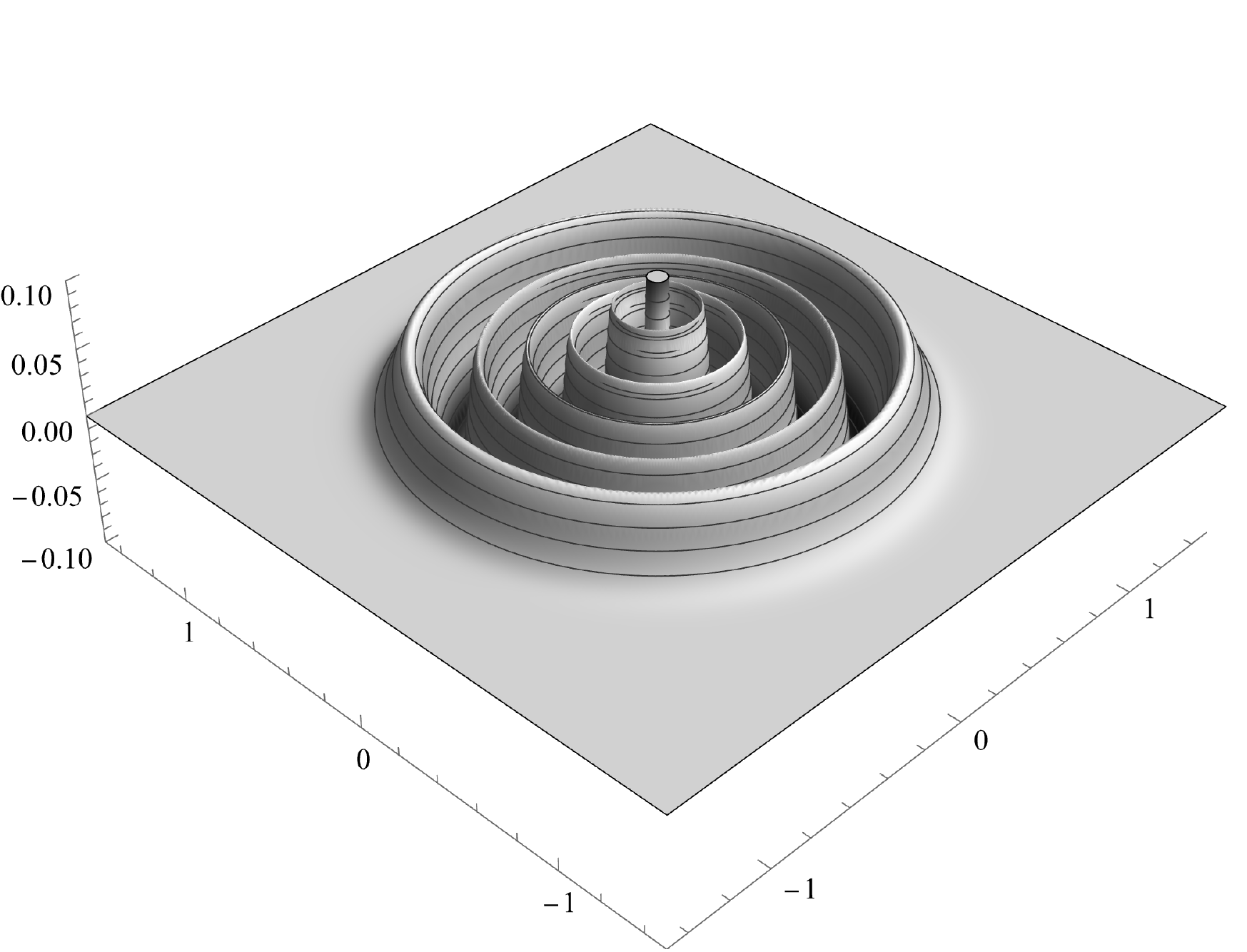} }}
    \caption{}\label{fig:pointwise}
    \end{figure}
Our main result concerns the scaling asymptotics between the power
growth on the orbit and the exponential decay off of it (see Figures \ref{fig:1} and \ref{fig:2} for pictures).
\begin{theo}\label{t:scale} Suppose
$(x,\xi)$ lies on $\gamma$, and let $(x,\xi),v_1,v_2$ be an orthogonal basis (with each vector of norm $\sqrt{2E}$) for the normal space, $N_{(x,\xi)}\gamma$, of $\gamma$ at $(x,\xi)$. For $(u,w_1,w_2) \in \mathbb{R}^3$
 in a sufficiently small neighborhood of $(0,0,0)$, we have the uniform asymptotic expansion
 \begin{align*}
   \frac{1}{N^{5/3}}W_{\varphi_N^{\gamma},\varphi_N^{\gamma}}\big[(1+u)(x,\xi)&+w_1v_1+w_2v_2\big]\\
                                                                                  &=
                                                          \frac{ e^{-2(N+1)(w_1^2+w_2^2)}}{4\pi^3E^{2}}                                                 
                                                         \mu_{00}(u)\operatorname{Ai}(a(u)N^{2/3})+\operatorname{Ai}'(a(u)N^{2/3})\cdot O(N^{-2/3}) 
 \end{align*}
 where
 $$a(u)^{3/2}= \begin{cases}
      3(u+1)i\sqrt{|u|(u+2)}-3i \operatorname{arccos}(1+u)&\text{ if }u<0\\
      3(u+1)\sqrt{u(u+2)}-3\operatorname{arccosh}(1+u)  &\text{ if }u\geq 0
      \end{cases}$$
and $\mu_{00}(u)$ is a smooth function (explicitly given in Section \ref{s:mu}) with $\mu_{00}(0)=2^{5/3}\pi$.
In particular, if we rescale $u \to \frac{u}{(2N)^{\frac{2}{3}}}$, $w_1 \to \frac{w_1}{(2N)^{\frac{1}{2}}}$, and $w_2 \to \frac{w_2}{(2N)^{\frac{1}{2}}}$, we have the pointwise limit
$$\lim_{N \to \infty}\frac{1}{N^{5/3}}W_{\varphi_N^{\gamma},\varphi_N^{\gamma}}\big[(1+\tfrac{u}{(2N)^{2/3}})(x,\xi)+\tfrac{w_1}{(2N)^{1/2}}v_1+\tfrac{w_2}{(2N)^{1/2}}v_2\big]= \frac{e^{-(w_1^2+w_2^2)}}{2^{1/3}\pi^2E^2}\operatorname{Ai}(2u).$$
\end{theo}
\begin{figure}[!htbp]
    \centering
    \subfloat[\centering The Wigner distribution and the scaling when $N=50$ and $E=1$]{{\includegraphics[width=8cm]{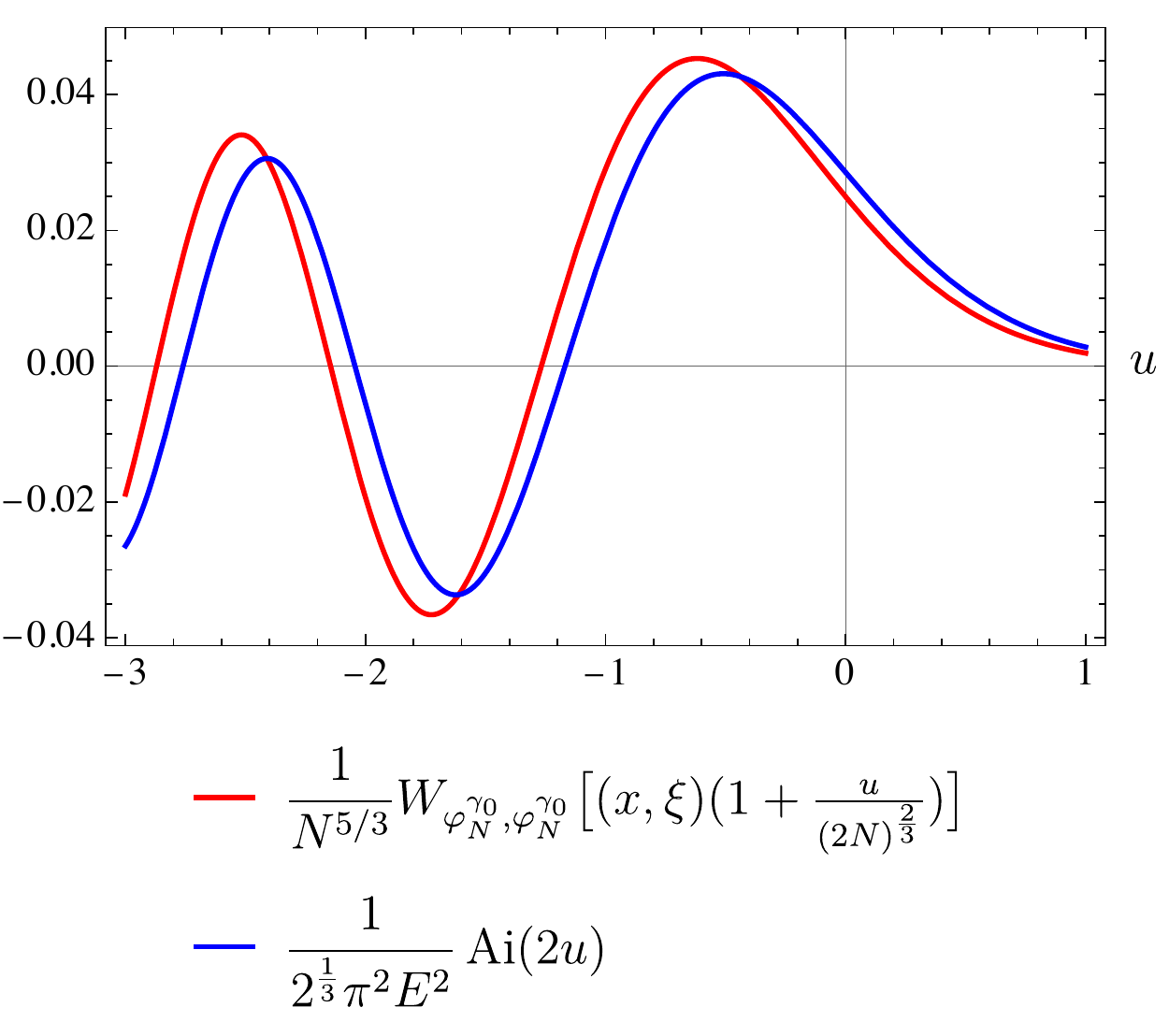} }}%
    \subfloat[\centering The Wigner distribution and the scaling when $N=200$ and $E=1$]{{\includegraphics[width=8cm]{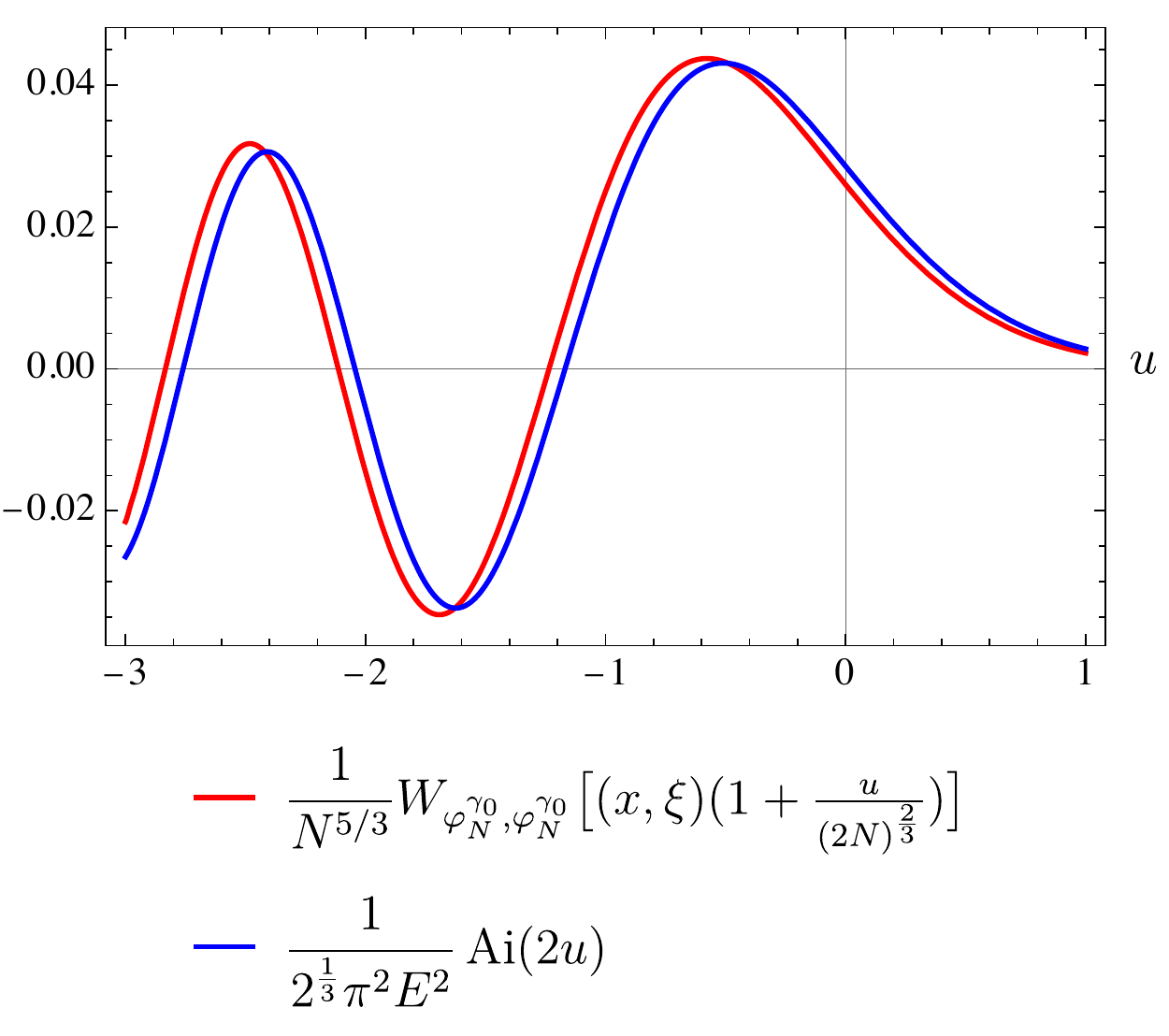} }}
    \caption{}\label{fig:1}%
    \end{figure}
    \begin{figure}[!htbp]
    \centering
    \subfloat[\centering The Wigner distribution and the scaling when $N=50$ and $E=1$]{{\includegraphics[width=8cm]{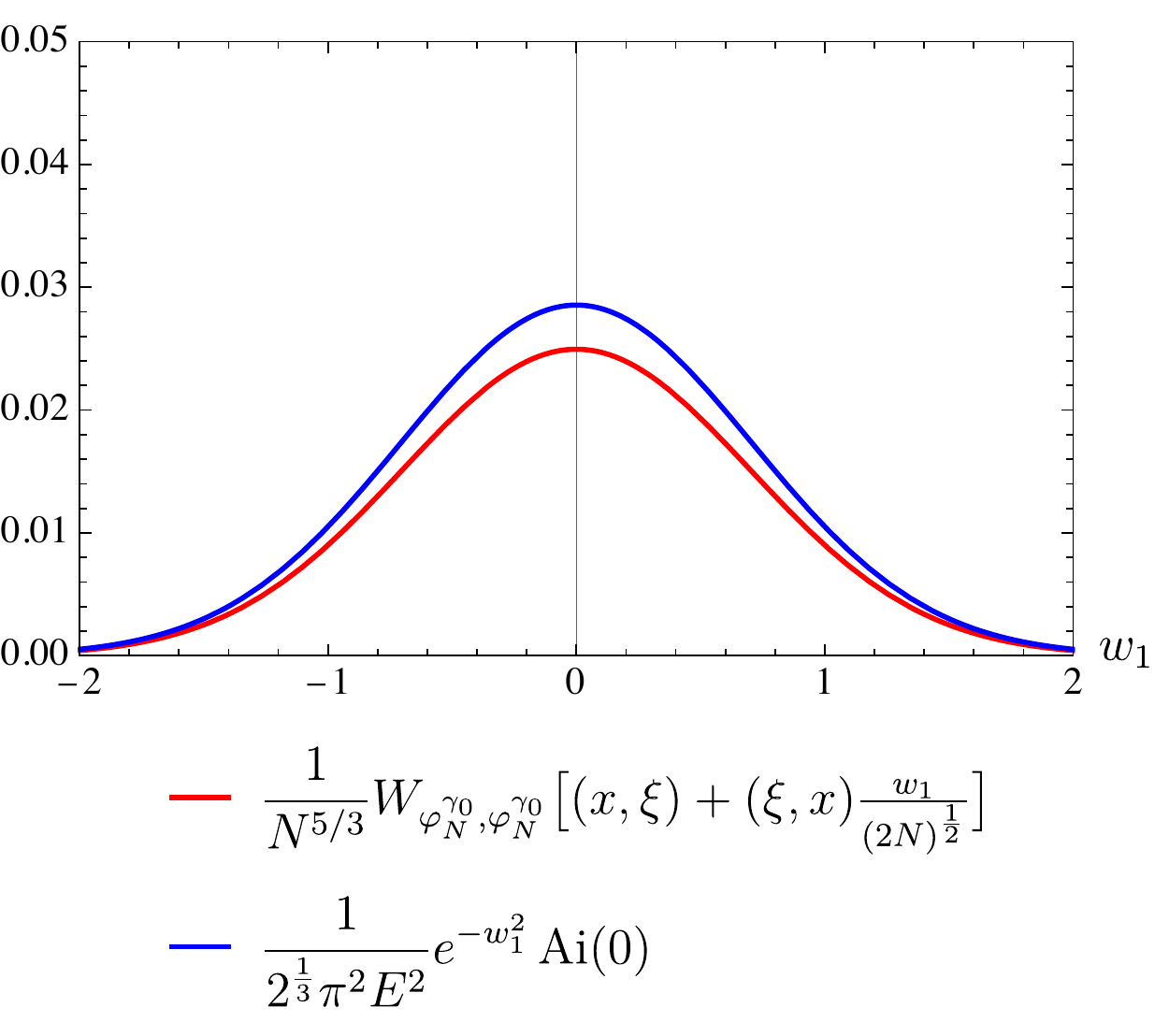} }}%
    \subfloat[\centering The Wigner distribution and the scaling when $N=200$ and $E=1$]{{\includegraphics[width=8cm]{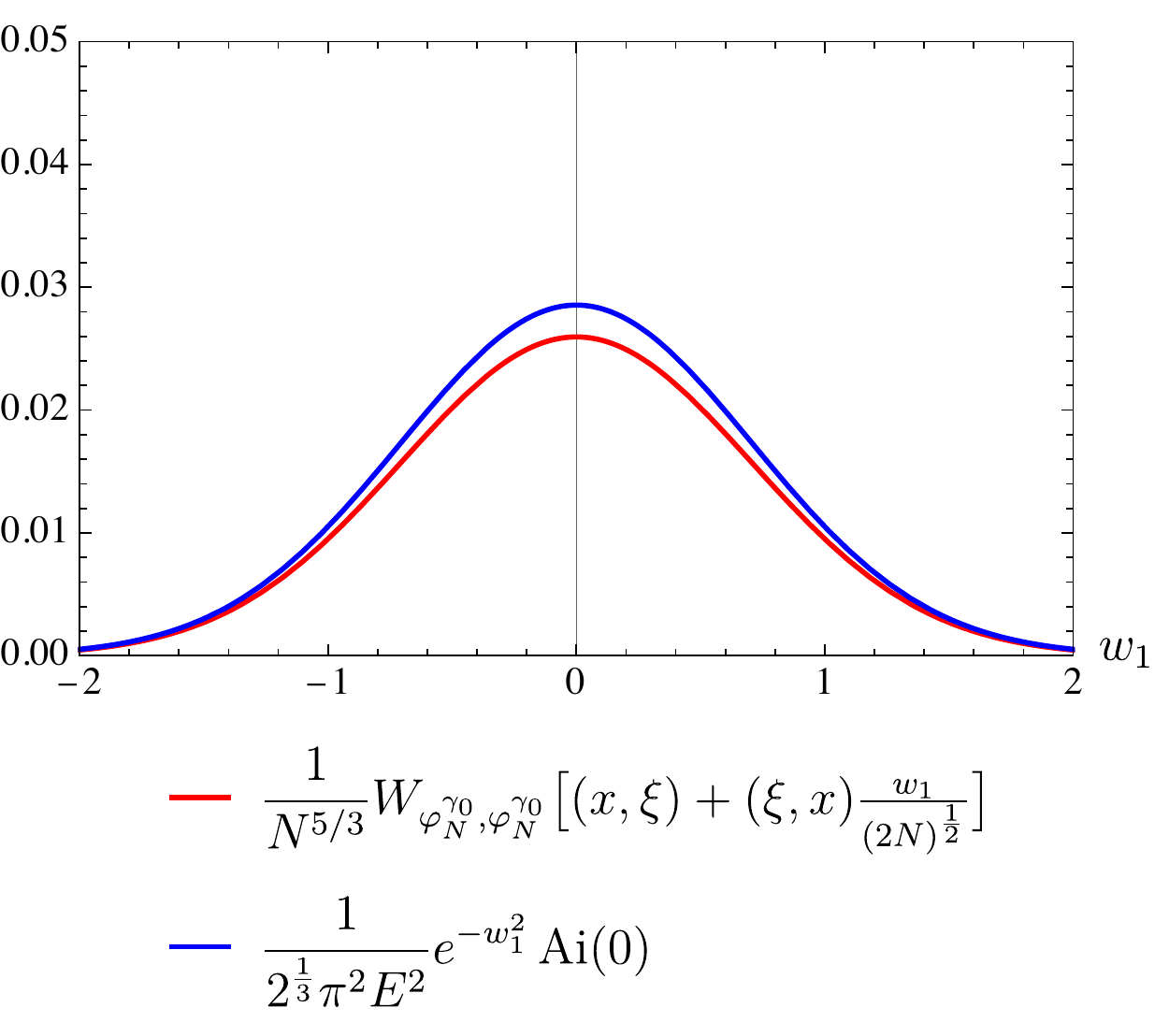} }}
    \caption{}\label{fig:2}%
    \end{figure}
\begin{rem} The
point $(x,\xi)$ lies on an orbit $\gamma_0$ which lies on the energy surface $\Sigma_E:=\{(x,\xi) :
\tfrac{1}{2}(|x|^2+|\xi|^2)=E\}$. Normal to $\Sigma_E$ at the
point $(x,\xi)$ is the vector $(x,\xi) u$ where $u \in
\mathbb{R}$, and tangent to $\Sigma_E$ at the point $(x,\xi)$
are the the vectors $w_1v_1$ and $w_2v_2$ where
$w_1,w_2 \in \mathbb{R}$. The above theorem shows that
$W_{\varphi_N^{\gamma_0},\varphi_N^{\gamma_0}}$ exhibits Airy
scaling normal to the energy surface and Gaussian scaling
tangent to the energy surface. This is consistent with the Airy
scaling result of the Wigner distributions of the eigenspace
projectors in \cite{HZ20}, and we explain this in Section \ref{s:p}.
\par The function $a(u)$ has a geometric interpretation. When
$u<0$, the function $-\frac{2E}{3}ia(u)^{3/2}$ is the area $A_u$ of the
circular segment bounded by $\gamma_0$ where $|u|\sqrt{2E}$ is the height
of the arced portion, as depicted in Figure
\ref{f:segment}. When $u \geq 0$, we have a similar
interpretation but for a hyperbolic segment, also depicted in
Figure \ref{f:segment}.

\end{rem}

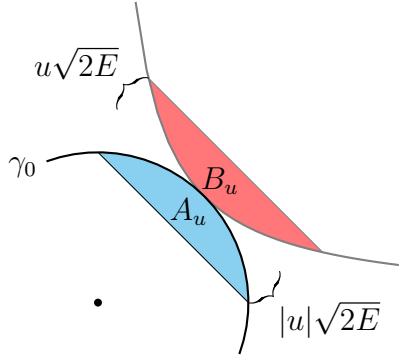
\begin{figure}[!htbp]
\centering
\begin{tikzpicture}
  \coordinate (A) at ({0.669485-2},{2.98737-2}) {};
  \coordinate (B) at ({2.98737-2},{0.669485-2}) {};
  \fill [BabyRed, domain=0.669485:2.98737, variable=\x]
  (A)
  -- plot ({\x-2}, {2/\x-2})
  -- (B)
  -- cycle;
  \fill [BabyBlue, domain=0:90, variable=\x]
  (0,-2)
  -- plot ({2*cos(\x)-2}, {2*sin(\x)-2})
  -- (-2,0)
  -- cycle;
  \draw (-2,0) -- (0,-2);
  \draw [gray] (A) -- (B); 
  \draw [gray,thick,domain=0.5:4] plot ({\x-2}, {2/\x-2});
  \draw [thick,domain=-20:110] plot ({2*cos(\x)-2}, {2*sin(\x)-2});
  \fill[fill=black] (-2,-2) circle (1.5pt);
  \draw [thick,
    decorate, 
    decoration = {calligraphic brace,
        raise=0pt,
        amplitude=5pt,
        aspect=0.5}]  ({sqrt(2)-1},{sqrt(2)-3}) -- (0,-2)
node[pos=0.5,below right=1pt,black]{$|u|\sqrt{2E}$};
\node (c) at (-3,-0.2) {$\gamma_0$};
\node (d) at ({(sqrt(2)-3)/2},{(sqrt(2)-3)/2}) {$A_u$};
\node (e) at ({(3*sqrt(2)-5)/2},{(3*sqrt(2)-5)/2}) {$B_u$};
  \draw [thick,
    decorate, 
    decoration = {calligraphic brace,
        raise=0pt,
        amplitude=5pt,
        aspect=0.5}]  ({-1.74473},{0.573136}) -- (A)
node[pos=0.5,above left=1pt,black]{$u\sqrt{2E}$};
\end{tikzpicture}
\caption{This figure shows the geometric interpretation of $a(u)$ in terms of the areas $A_u$ and $B_u$.}\label{f:segment}
  \end{figure}
    \subsection{Future Work}
    The asymptotics of the Wigner distributions in this paper are some of the few explicit results on Wigner distributions of eigenfunctions. In future work, we plan to study the asymptotics of Wigner distributions of orbital coherent states of the hydrogen atom. They are much more difficult to compute, but the isotropic harmonic oscillator of this paper provides a more readily computable model which guides future work.
    \subsection{Acknowledgments}
    This article is part of the Ph.D. thesis of the author at Northwestern University under the guidance of Steve Zelditch. The author thanks Stephan De Bi\`evre for beneficial comments. The author also thanks Michael Geis, Erik Hupp, and Abraham Rabinowitz for helpful conversations.
\section{Proofs of the Theorems}
\subsection{Preliminaries}
\subsubsection{Formula}
Before we get into the proofs of the theorems, we first write
down a formula for the Wigner distribution. Recall from \eqref{eq:coherent} that the
coherent state $\varphi_N^{\gamma_0}$ is
$$\varphi_{N}^{\gamma_0}(x)=\frac{(N+1)^{\frac{N+1}{2}}}{E^{\frac{N+1}{2}}\sqrt{\pi
    N!}}(x_1+ix_2)^Ne^{-\tfrac{(N+1)}{2 E}|x|^2}.$$
The Wigner distribution for $E=(N+1)\hbar$ is then
\begin{align}
  W_{\varphi_N^{\gamma_0},\varphi_N^{\gamma_0}}(x,\xi) &= \int_{\mathbb{R}^2}\varphi_N^{\gamma_0}(x+\tfrac{v}{2})\overline{\varphi_N^{\gamma_0}(x-\tfrac{v}{2})}e^{-\frac{i}{\hbar
   }\langle v,\xi\rangle}\frac{dv}{4\pi^2 \hbar^2} \label{eq:wigl2}\\
 &= \frac{(N+1)^{N+3}}{4\pi^3 E^{N+3} N!} \int_{\mathbb{R}^2} (x + \tfrac{v}{2})^N \overline{(x  - \tfrac{v}{2})}^N  e^{- \frac{N+1}{2E} |x - \frac{v}{2}|^2}  e^{- \frac{N+1}{2E} |x + \frac{v}{2}|^2} e^{- i\frac{(N+1)}{E} \langle v,  \xi \rangle} d v \nonumber
\end{align}
where the multiplication within the integral is complex multiplication. Since $$|x - \tfrac{v}{2}|^2 + |x + \tfrac{v}{2}|^2 =2|x|^2+\tfrac{1}{2}|v|^2, $$
we have
\begin{align}
W_{\varphi_N^{\gamma_0},\varphi_N^{\gamma_0}}(x,\xi) &= \frac{(N+1)^{N+3}}{4\pi^3E^{N+3} N!} \int_{\mathbb{R}^2} (x + \tfrac{v}{2})^N \overline{(x  - \tfrac{v}{2})}^N  e^{- \frac{N+1}{2E} |x - \frac{v}{2}|^2}  e^{- \frac{N+1}{2E} |x + \frac{v}{2}|^2} e^{- i\frac{(N+1)}{E} \langle v,  \xi \rangle} d v \nonumber\\
  &=  \frac{(N+1)^{N+3}}{4\pi^3 E^{N+3} N!}  e^{- \frac{(N+1)}{E}  |x|^2} \int_{\mathbb{R}^2} (x + \tfrac{v}{2})^N \overline{(x  - \tfrac{v}{2})}^N  e^{- \frac{N+1}{4E}   |v|^2}  
    e^{- i\frac{(N+1)}{E}\langle v, \xi \rangle} d
    v\label{eq:forlem} \\
  &=  \frac{(N+1)^{N+3}}{4\pi^3 E^{N+2} N!}  e^{- \frac{(N+1)}{E}  |x|^2} \int_{\mathbb{R}^2} (x + \tfrac{v}{2}\sqrt{E})^N \overline{(x  - \tfrac{v}{2}\sqrt{E})}^N  e^{- \frac{N+1}{4}   |v|^2}  
    e^{- i\frac{(N+1)}{\sqrt{E}}\langle v, \xi \rangle} d v,\nonumber  
\end{align}
where the last line we applied the substitution $v \to \sqrt{E}v$.
To apply the stationary phase method on this formula later on, we write
$$e^{N \log (x + \frac{v}{2}\sqrt{E}) \overline{(x  - \frac{v}{2}\sqrt{E})}}=(x + \tfrac{v}{2}\sqrt{E})^N
\overline{(x  - \tfrac{v}{2}\sqrt{E})}^N $$
where we use the principal branch of $\log$ and define the complex phase
\begin{align}
\Psi(x,\xi,v) \coloneqq& -i \log (x + \tfrac{v}{2}\sqrt{E})
                         \overline{(x  -
                         \tfrac{v}{2}\sqrt{E})}+i\log(E)+i(\tfrac{|x|^2}{E}-1)+i\tfrac{|v|^2}{4}-\tfrac{1}{\sqrt{E}}\langle
                         v,\xi \rangle\nonumber \\
  =&  -i \log (\tfrac{x}{\sqrt{E}} + \tfrac{v}{2})
                         \overline{(\tfrac{x}{\sqrt{E}}  -
                         \tfrac{v}{2})}+i(\tfrac{|x|^2}{E}-1)+i\tfrac{|v|^2}{4}-\tfrac{1}{\sqrt{E}}\langle
                         v,\xi \rangle \label{eq:phase}
\end{align}
The Wigner distribution is then
\begin{equation}\label{eq:wigstat}
W_{\varphi_N^{\gamma_0},\varphi_N^{\gamma_0}}(x,\xi)=  \frac{(N+1)^{N+3}}{4\pi^3E^{2}  N!}  e^{-N- \frac{1}{E}  |x|^2} \int_{\mathbb{R}^2}  e^{- \frac{1}{4E}   |v|^2}  
e^{- \frac{i}{E}\langle v, \xi \rangle}e^{iN \Psi(x,\xi,v)} d v.
\end{equation}
We can use Stirling's approximation to write
\begin{equation*}\label{eq:wigstat2}
W_{\varphi_N^{\gamma_0},\varphi_N^{\gamma_0}}(x,\xi)= \frac{(N+1)^{5/2}}{4\pi^3\sqrt{2\pi}E^{2}} e^{1- \frac{1}{E}  |x|^2} \big(1+O(\tfrac{1}{N})\big) \int_{\mathbb{R}^2}  e^{- \frac{1}{4}   |v|^2}  
e^{- \frac{i}{\sqrt{E}}\langle v, \xi \rangle}e^{iN \Psi(x,\xi,v)} d v,
\end{equation*}
which implies
\begin{equation}\label{eq:wigstat3}
\tfrac{1}{N^{5/3}}W_{\varphi_N^{\gamma_0},\varphi_N^{\gamma_0}}(x,\xi)= \frac{N^{5/6}}{4\pi^3\sqrt{2\pi}E^{2}}  e^{1- \frac{1}{E}  |x|^2} \big(1+O(\tfrac{1}{N})\big) \int_{\mathbb{R}^2}  e^{- \frac{1}{4}   |v|^2}  
e^{- \frac{i}{\sqrt{E}}\langle v, \xi \rangle}e^{iN \Psi(x,\xi,v)} d v.
\end{equation}
The next lemma will be very useful for our stationary phase
approximations.
\begin{lem}\label{l:pos}
For $\Psi$ defined in \eqref{eq:phase}, we have
$$\Im \Psi(x,\xi,v) \geq 0 $$
with equality if and only if $|v|^2=4(1-\tfrac{1}{E}|x|^2)$ and $\langle x,v\rangle=0$.
\end{lem}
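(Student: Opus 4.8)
The plan is to reduce both assertions to the elementary bound $\log t\le t-1$ after putting $\Im\Psi$ into closed form. Treating $x$ and $v$ as complex numbers, I would first set $z\coloneqq\tfrac{x}{\sqrt E}+\tfrac v2$ and $w\coloneqq\tfrac{x}{\sqrt E}-\tfrac v2$, so that the first summand of \eqref{eq:phase} is $-i\log(z\overline w)$. Since $\log\zeta=\log|\zeta|+i\arg\zeta$, one has $\Im\!\big(-i\log\zeta\big)=-\log|\zeta|$ for every $\zeta\ne0$; in particular this quantity does not see the branch cut, so the choice of principal branch is immaterial. The remaining summands are handled by inspection: $-\tfrac1{\sqrt E}\langle v,\xi\rangle$ is real and drops out, while $i\big(\tfrac{|x|^2}{E}-1\big)$ and $i\tfrac{|v|^2}{4}$ contribute their coefficients. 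This gives
\[
\Im\Psi(x,\xi,v)=-\log\big(|z|\,|w|\big)+\tfrac{|x|^2}{E}-1+\tfrac{|v|^2}{4}.
\]

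Next I would expand the norms with the real inner product on $\mathbb R^2$, namely $|z|^2=\tfrac{|x|^2}{E}+\tfrac1{\sqrt E}\langle x,v\rangle+\tfrac{|v|^2}{4}$ and the same with the middle sign flipped for $|w|^2$, so that, writing $P\coloneqq\tfrac{|x|^2}{E}+\tfrac{|v|^2}{4}$ and $Q\coloneqq\tfrac{\langle x,v\rangle^2}{E}\ge0$, we obtain $|z|^2|w|^2=P^2-Q$ and hence the clean identity
\[
\Im\Psi=(P-1)-\tfrac12\log\big(P^2-Q\big)=\big[(P-1)-\log P\big]+\tfrac12\log\frac{P^2}{P^2-Q}.
\]
On the domain of $\Psi$ one has $z\overline w\ne0$, so $P^2-Q=|z|^2|w|^2>0$, which in turn forces $P>0$ and legitimizes the split. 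Both bracketed terms are nonnegative: the first by $\log t\le t-1$ (equality iff $t=1$), the second because $Q\ge0$ makes $P^2/(P^2-Q)\ge1$ (equality iff $Q=0$). Therefore $\Im\Psi\ge0$, with equality exactly when $P=1$ and $Q=0$, i.e. $|v|^2=4\big(1-\tfrac1E|x|^2\big)$ and $\langle x,v\rangle=0$, which is the assertion.

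I do not expect a genuine obstacle here; the points that need care are bookkeeping ones. One must keep the real-versus-complex roles of $x$ and $v$ consistent throughout, and one should observe that $\Psi$ is only defined off the set $\{z\overline w=0\}$, where the logarithm blows up and $\Im\Psi\to+\infty$ — consistent with the prefactor $(x+\tfrac v2\sqrt E)^N\overline{(x-\tfrac v2\sqrt E)}^N$ of the integrand in \eqref{eq:wigstat} vanishing there. This singular set is automatically disjoint from the equality locus, since on the latter $P^2-Q=1$, which forces $|z|,|w|\ne0$.
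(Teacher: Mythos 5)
Your proof is correct and follows essentially the same route as the paper's: both reduce to $\Im\Psi = -\log\lvert z\overline w\rvert + P - 1$, bound $\lvert z\rvert\lvert w\rvert \le P$ by discarding the nonnegative term coming from $\langle x,v\rangle$ (the paper reaches the same bound via $\lvert\Im(v\overline x)\rvert\le\lvert v\rvert\lvert x\rvert$, which is the same estimate in different coordinates), and finish with $\log t\le t-1$. Your rewriting as the sum $\bigl[(P-1)-\log P\bigr]+\tfrac12\log\tfrac{P^2}{P^2-Q}$ of two manifestly nonnegative pieces is a slightly cleaner way to package the equality analysis, but it is a cosmetic rearrangement of the paper's chain of inequalities rather than a different argument.
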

\begin{proof}
Indeed,
\begin{align*}
\Im \Psi(x,\xi,v) &\overset{\mathclap{\eqref{eq:phase}}}{=}- \log\Big| (\tfrac{x}{\sqrt{E}} + \tfrac{v}{2})                        \overline{(\tfrac{x}{\sqrt{E}}  -                    \tfrac{v}{2})}\Big|+\tfrac{|x|^2}{E}-1+\tfrac{|v|^2}{4}\\
  &=- \log\big| \tfrac{|x|^2}{E}-\tfrac{|v|^2}{4} +
    i\tfrac{1}{\sqrt{E}}\Im(v \overline{x})
    \big|+\tfrac{|x|^2}{E}-1+\tfrac{|v|^2}{4}\\
  &=-\tfrac{1}{2} \log\Big( \big(\tfrac{|x|^2}{E}-\tfrac{|v|^2}{4}\big)^2 +
    \tfrac{1}{E}\big(\Im(v
    \overline{x})\big)^2\Big)+\tfrac{|x|^2}{E}-1+\tfrac{|v|^2}{4}\\
  &\geq - \tfrac{1}{2}\log\Big(
    \big(\tfrac{|x|^2}{E}-\tfrac{|v|^2}{4}\big)^2 +
    \tfrac{1}{E}|xv|^2\Big)+\tfrac{|x|^2}{E}-1+\tfrac{|v|^2}{4}\\
  &= - \log\big(\tfrac{|x|^2}{E}+\tfrac{|v|^2}{4}
    \big)+\tfrac{|x|^2}{E}-1+\tfrac{|v|^2}{4}\\
  &=- \log\big(1+\tfrac{|x|^2}{E}-1+\tfrac{|v|^2}{4}
    \big)+\tfrac{|x|^2}{E}-1+\tfrac{|v|^2}{4}\\
  &\geq 0,
\end{align*}
where the first inequality we use the fact that $|\Im z | \leq |z|$ (with equality if and only if $\Re z =0$), and the last inequality we use the fact that $-\log(1+x)+x \geq 0$ for all $x \geq -1$ (with equality if and only if $x=0$), which completes the proof.
\end{proof}
\subsubsection{Symmetries}
One important property of Wigner distributions is the so-called
metaplectic covariance. In dimension 2, it states that given $f
\in L^2(\mathbb{R}^2)$ and $\mu$ the metaplectic representation
on $L^2(\mathbb{R}^2)$, we have
\begin{equation}\label{eq:meta}
W_{\mu(A)f,\mu(A)f}(x,\xi) = W_{f,f}\big(A^*(x,\xi)\big)
\end{equation}
for all $A \in \operatorname{Sp}(4,\mathbb{R})$ (see page 180 of \cite{F89}). This is useful
in proving results for the Wigner distributions of general
coherent states $\varphi_N^{\gamma}$ from that of Wigner distributions of
$\varphi_N^{\gamma_0}$. In particular, if $U \in
\operatorname{SU}(2) \subset \operatorname{Sp}(4,\mathbb{R})$ is
such that $U \cdot \gamma_0=\gamma$, then we have
\begin{equation}\label{eq:meta2}
W_{\varphi_N^{\gamma},\varphi_N^{\gamma}}(x,\xi) \overset{\mathclap{\eqref{eq:meta}}}{=} W_{\mu(U)\varphi_N^{\gamma_0},\mu(U)\varphi_N^{\gamma_0}}(x,\xi) \overset{\mathclap{\eqref{eq:genphi}}}{=} W_{\varphi_N^{\gamma_0},\varphi_N^{\gamma_0}}\big(U^*(x,\xi)\big).
\end{equation}
The next lemma will be useful in proving our results for general points in phase space from an initial point.
\begin{lem}\label{lem:rotation} For $g_{\theta}=\begin{psmallmatrix}\cos \theta & -\sin \theta \\
\sin \theta & \cos \theta \end{psmallmatrix} \in \operatorname{SO}(2)$, we have
$$W_{\varphi_N^{\gamma_0},\varphi_N^{\gamma_0}}(g_{\theta}\cdot
x,g_{\theta}\cdot
\xi)=W_{\varphi_N^{\gamma_0},\varphi_N^{\gamma_0}}(x,\xi).$$
In particular, $W_{\varphi_N^{\gamma},\varphi_N^{\gamma}}$ is radial in the plane containing $\gamma$.
\end{lem}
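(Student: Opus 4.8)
The plan is to deduce the stated $\operatorname{SO}(2)$-invariance directly from the integral formula \eqref{eq:wigl2} for $W_{\varphi_N^{\gamma_0},\varphi_N^{\gamma_0}}$, using that the ground state $\varphi_N^{\gamma_0}$ transforms by a phase under a rotation of its argument. First I would record this transformation law: since $g_\theta\in\operatorname{SO}(2)$ satisfies $(g_\theta x)_1+i(g_\theta x)_2=e^{i\theta}(x_1+ix_2)$ and $|g_\theta x|=|x|$, the explicit formula \eqref{eq:coherent} immediately gives $\varphi_N^{\gamma_0}(g_\theta x)=e^{iN\theta}\varphi_N^{\gamma_0}(x)$ for all $\theta$ and all $x$.

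Next I would substitute $(x,\xi)\mapsto(g_\theta x,g_\theta\xi)$ into \eqref{eq:wigl2} and change variables $v\mapsto g_\theta v$ in the defining integral. Since $|\det g_\theta|=1$ the Lebesgue measure is preserved, $\langle g_\theta v,g_\theta\xi\rangle=\langle v,\xi\rangle$ because $g_\theta$ is orthogonal, and $g_\theta x\pm\tfrac12 g_\theta v=g_\theta\!\left(x\pm\tfrac{v}{2}\right)$; applying the transformation law to the two factors $\varphi_N^{\gamma_0}\!\left(g_\theta(x+\tfrac{v}{2})\right)$ and $\overline{\varphi_N^{\gamma_0}\!\left(g_\theta(x-\tfrac{v}{2})\right)}$ contributes the factor $e^{iN\theta}\overline{e^{iN\theta}}=1$, so the integral becomes exactly \eqref{eq:wigl2} evaluated at $(x,\xi)$. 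This proves $W_{\varphi_N^{\gamma_0},\varphi_N^{\gamma_0}}(g_\theta x,g_\theta\xi)=W_{\varphi_N^{\gamma_0},\varphi_N^{\gamma_0}}(x,\xi)$. (The same identity also follows from metaplectic covariance \eqref{eq:meta} applied to the block-diagonal symplectic matrix $\operatorname{diag}(g_\theta,g_\theta)$, whose metaplectic operator acts on $\varphi_N^{\gamma_0}$ by the scalar $\pm e^{-iN\theta}$; the direct computation above avoids the metaplectic sign ambiguity.)

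For the final claim, I would first check it for $\gamma_0$. The $2$-plane $P_0$ spanned by $\gamma_0$ is $\operatorname{span}\{\gamma_0(0),\gamma_0(\tfrac{\pi}{2})\}=\{(a,b,-b,a):a,b\in\mathbb{R}\}$, and a short computation shows that $\operatorname{diag}(g_\theta,g_\theta)$ preserves $P_0$ and acts on it, in the coordinates $(a,b)$, as the rotation $g_\theta$; hence the identity just established says that $W_{\varphi_N^{\gamma_0},\varphi_N^{\gamma_0}}$ restricted to $P_0$ is invariant under every rotation about the origin, i.e.\ radial. For a general orbit $\gamma$, pick $U\in\operatorname{SU}(2)\subset\operatorname{Sp}(4,\mathbb{R})$ with $U\cdot\gamma_0=\gamma$; as $U$ lies in the maximal compact subgroup $\operatorname{U}(2)\subset\operatorname{SO}(4)$ it is orthogonal, and it carries $P_0$ isometrically onto the plane $P$ containing $\gamma$. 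By \eqref{eq:meta2}, $W_{\varphi_N^{\gamma},\varphi_N^{\gamma}}$ restricted to $P$ is the pullback of $W_{\varphi_N^{\gamma_0},\varphi_N^{\gamma_0}}|_{P_0}$ by the isometry $U^{*}|_{P}\colon P\to P_0$, so it is radial on $P$ as well.

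This argument is elementary; the only points requiring a little care are the bookkeeping in the change of variables (orthogonality of $g_\theta$ is used for the Jacobian, for the pairing, and for the factorization $g_\theta x\pm\tfrac12 g_\theta v=g_\theta(x\pm\tfrac{v}{2})$) and the observation that the relevant copy of $\operatorname{SU}(2)$ sits inside $\operatorname{SO}(4)$, which is exactly what makes ``radial'' transfer from $\gamma_0$ to a general $\gamma$. There is no substantial obstacle.
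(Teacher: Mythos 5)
Your proof is correct and takes essentially the same route as the paper: the paper's own argument also changes variables $v\mapsto g_\theta v$ in the integral formula (it uses \eqref{eq:forlem} rather than \eqref{eq:wigl2}, which differs only by the already-performed algebraic simplification) and uses that the $SO(2)$ action on $x_1+ix_2$ contributes a phase $e^{iN\theta}$ cancelling against its conjugate, and it likewise derives the ``radial in the plane of $\gamma$'' statement from metaplectic covariance \eqref{eq:meta2}. You merely spell out in more detail the geometry of the plane $P_0=\{(a,b,-b,a)\}$ and the action of $\operatorname{diag}(g_\theta,g_\theta)$ on it, which the paper leaves implicit.
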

\begin{proof}
One proof is to use the fact that the restricted representation of the metaplectic representation to the subgroup $\operatorname{SO}(2)$ is rotation of functions. The coherent state $\varphi_N^{\gamma_0}$ is invariant under rotation up to a phase factor, which results in equal Wigner distributions. Alternatively,
we can use formula \eqref{eq:forlem}. We have
\begin{align*}
W_{\varphi_N^{\gamma_0},\varphi_N^{\gamma_0}}&(g_{\theta}\cdot
x,g_{\theta}\cdot
\xi) =  \frac{(N+1)^{N+3}}{4\pi^3 E^{N+3} N!}  e^{-
     \frac{(N+1)}{E}
                                               |g_{\theta}\cdot
                                               x|^2} \\
  &\qquad \qquad \qquad \qquad \int_{\mathbb{R}^2} (g_{\theta}\cdot x + \tfrac{v}{2})^N \overline{(g_{\theta}\cdot x  - \tfrac{v}{2})}^N  e^{- \frac{N+1}{4E}   |v|^2}  
    e^{- i\frac{(N+1)}{E}\langle v,g_{\theta}\cdot \xi \rangle}
  d v\\
  &=  \frac{(N+1)^{N+3}}{4\pi^3 E^{N+3} N!}  e^{- \frac{(N+1)}{E}  | x|^2} \int_{\mathbb{R}^2} (g_{\theta}\cdot x + \tfrac{v}{2})^N \overline{(g_{\theta}\cdot x  - \tfrac{v}{2})}^N  e^{- \frac{N+1}{4E}   |v|^2}  
    e^{- i\frac{(N+1)}{E}\langle g_{\theta}^*\cdot v,\xi
    \rangle} d v\\
   &=  \frac{(N+1)^{N+3}}{4\pi^3 E^{N+3} N!}  e^{- \frac{(N+1)}{E}  | x|^2} \int_{\mathbb{R}^2} (g_{\theta}\cdot x + \tfrac{g_{\theta}\cdot v}{2})^N \overline{(g_{\theta}\cdot x  - \tfrac{g_{\theta}\cdot v}{2})}^N  e^{- \frac{N+1}{4E}   |g_{\theta}\cdot v|^2}  
     e^{- i\frac{(N+1)}{E}\langle  v,\xi \rangle} d v\\
  &=  \frac{(N+1)^{N+3}}{4\pi^3 E^{N+3} N!}  e^{- \frac{(N+1)}{E}  | x|^2} \int_{\mathbb{R}^2}\big( g_{\theta}\cdot( x + \tfrac{v}{2})\big)^N \overline{\big(g_{\theta}\cdot (x  - \tfrac{ v}{2})\big)}^N  e^{- \frac{N+1}{4E}   |v|^2}  
    e^{- i\frac{(N+1)}{E}\langle  v,\xi \rangle} d v\\
   &=  \frac{(N+1)^{N+3}}{4\pi^3 E^{N+3} N!}  e^{- \frac{(N+1)}{E}  | x|^2} \int_{\mathbb{R}^2}e^{iN \theta}( x + \tfrac{v}{2})^N e^{-iN \theta}\overline{(x  - \tfrac{ v}{2})}^N  e^{- \frac{N+1}{4E}   |v|^2}  
     e^{- i\frac{(N+1)}{E}\langle  v,\xi \rangle} d v\\
  &=W_{\varphi_N^{\gamma_0},\varphi_N^{\gamma_0}}(x,\xi),
\end{align*}
where the third line we switch variables $v \to g_{\theta}\cdot v$. The last statement of the lemma follows from an application of \eqref{eq:meta2}.
\end{proof}

\subsection{Proof of Proposition \ref{t:delta}}
We first prove this for $\gamma_0(t)=\sqrt{E}( \cos t ,  \sin t, - \sin t, \cos t)$.
We write
$$I_N \coloneqq \int_{T^*\mathbb{R}^2}a(x,\xi)W_{\varphi_N^{\gamma_0},\varphi_N^{\gamma_0}}(x,\xi)dxd\xi. $$
Using \eqref{eq:wigstat3}, we have
\begin{align*}
I_N &= \frac{N^{5/2}}{4\pi^3\sqrt{2\pi}E^{2}}\big(1+O(\tfrac{1}{N})\big)  \int_{\mathbb{R}^6} a(x,\xi)e^{1- \frac{1}{E}  |x|^2}  e^{- \frac{1}{4}   |v|^2}  
e^{- \frac{i}{\sqrt{E}}\langle v, \xi \rangle}e^{iN \Psi(x,\xi,v)} d vdxd\xi.
\end{align*}
We now apply the method of stationary phase for Bott-Morse
functions (see Theorem \ref{t:BM} in Section \ref{s:phase}). Recall from
\eqref{eq:phase} that $\Psi$ is defined by
$$\Psi(x,\xi,v) \coloneqq -i \log (\tfrac{x}{\sqrt{E}} + \tfrac{v}{2})
                         \overline{(\tfrac{x}{\sqrt{E}}  -
                         \tfrac{v}{2})}+i(\tfrac{|x|^2}{E}-1)+i\tfrac{|v|^2}{4}-\tfrac{1}{\sqrt{E}}\langle
                         v,\xi \rangle.$$ 
Using Wirtinger derivatives in $\xi$ and $v$ along with Lemma \ref{l:pos},
one can deduce
\begin{align*}
C \coloneqq \{(x,\xi,v) \in \mathbb{R}^6=\mathbb{C}^3\mid \Im\Psi=0,\Psi'=0\}=\{(\sqrt{E}e^{i\theta},i\sqrt{E}e^{i\theta},0) \in \mathbb{C}^3\mid \theta \in [0,2\pi)\}.
\end{align*}
This is a one-dimensional submanifold where the tangent space
$T_{\theta}C$ is spanned by the vector
$$-\sin
(\theta)\partial_{x_1}+\cos(\theta)\partial_{x_2}-\cos(\theta)\partial_{\xi_1}-\sin(\theta)\partial_{\xi_2}. $$
Completing this to an orthonormal basis of $\mathbb{R}^6$, we
see that normal
space, $N_{\theta}C$, of $C$ at $\theta$ is the span of
$$\partial_{v_1},\partial_{v_2},\tfrac{1}{\sqrt{2}}(\partial_{x_2}+\partial_{\xi_1}),\tfrac{1}{\sqrt{2}}(\partial_{x_1}-\partial_{\xi_2}),\tfrac{1}{\sqrt{2}}\big(\cos
(\theta)\partial_{x_1}+\sin(\theta)\partial_{x_2}-\sin(\theta)\partial_{\xi_1}+\cos(\theta)\partial_{\xi_2}\big).$$
Using this basis, one can compute
$$\det(-i\Psi''|_{N_{\theta}C})=8/E^3.$$
It is easy to see that the natural Riemannian measure on $C$ is
$d\mu_C=\sqrt{2E}d\theta$. Now applying Theorem \ref{t:BM}, we
see
\begin{align*}
I_N&= \frac{N^{5/2}}{4\pi^3\sqrt{2\pi}E^{2}}\big(1+O(\tfrac{1}{N})\big)  \int_{\mathbb{R}^6} a(x,\xi)e^{1- \frac{1}{E}  |x|^2}  e^{- \frac{1}{4}   |v|^2}  
e^{- \frac{i}{\sqrt{E}}\langle v, \xi \rangle}e^{iN
     \Psi(x,\xi,v)} d vdxd\xi\\
  &=
    \frac{N^{5/2}}{4\pi^3\sqrt{2\pi}E^{2}}\big(1+O(\tfrac{1}{N})\big)\Big[
    \Big(\frac{2\pi}{N}\Big)^{\frac{5}{2}} \int_{C}
    \frac{a(u)}{\sqrt{8/E^3}}
    d\mu_C(u)+O(N^{-\frac{7}{2}})\Big]\\
   &=\frac{1}{2\pi}\int_{0}^{2\pi}a(\sqrt{E}e^{i\theta},i\sqrt{E}e^{i\theta})d\theta+O(\tfrac{1}{N})\\
  &=\frac{1}{2\pi}\int_{0}^{2\pi}a\big(\gamma_0(\theta)\big)d\theta+O(\tfrac{1}{N}),
\end{align*}
which is the desired result for $\gamma_0$. Another way to prove
this is to initially change $x$ to polar coordinates
$re^{i\theta}$, and then use Theorem \ref{t:hor} in the variables
$r,\xi,v$ for fixed $\theta$.\\
To prove the
proposition for a general orbit $\gamma$, let $U \in
\operatorname{SU}(2)$ be such that $U \cdot \gamma_0=\gamma$,
and let $a:T^*\mathbb{R}^2 \to \mathbb{R}$ be a smooth function
that decays exponentially. Then $b(x,\xi) \coloneqq a\big(
U(x,\xi)\big)$ is also a smooth function with exponential decay. Applying what we
proved to $b$, we have
\begin{equation}\label{eq:mid}
\int_{T^*\mathbb{R}^2}b(x,\xi)W_{\varphi_N^{\gamma_0},\varphi_N^{\gamma_0}}(x,\xi)\
dxd\xi \xrightarrow{N \to \infty}
\frac{1}{2\pi}\int_0^{2\pi}b\big(\gamma_0(\theta)\big)\ d\theta.
\end{equation}
Changing variables and using \eqref{eq:meta2}, we have
\begin{align*}
\int_{T^*\mathbb{R}^2}b(x,\xi)W_{\varphi_N^{\gamma_0},\varphi_N^{\gamma_0}}(x,\xi)\
dxd\xi
  &=\int_{T^*\mathbb{R}^2}a\big(U(x,\xi)\big)W_{\varphi_N^{\gamma_0},\varphi_N^{\gamma_0}}(x,\xi)dxd\xi\\
  &=\int_{T^*\mathbb{R}^2}a(x,\xi)W_{\varphi_N^{\gamma_0},\varphi_N^{\gamma_0}}\big(U^*(x,\xi)\big)dxd\xi\\
  &=\int_{T^*\mathbb{R}^2}a(x,\xi)W_{\mu(U)\varphi_N^{\gamma_0},\mu(U)\varphi_N^{\gamma_0}}(x,\xi)dxd\xi\\
   &=\int_{T^*\mathbb{R}^2}a(x,\xi)W_{\varphi_N^{\gamma},\varphi_N^{\gamma}}(x,\xi)dxd\xi.
\end{align*}
Substituting this into \eqref{eq:mid}, we see
$$\int_{T^*\mathbb{R}^2}a(x,\xi)W_{\varphi_N^{\gamma},\varphi_N^{\gamma}}(x,\xi)dxd\xi \xrightarrow{N \to \infty}
\frac{1}{2\pi}\int_0^{2\pi}a\big(\gamma(\theta)\big)\ d\theta ,$$
as desired.
\subsection{Proof of Theorem \ref{t:main2}}
We first prove this theorem for $\gamma_0(t)=\sqrt{E}( \cos t ,  \sin t, - \sin t, \cos t)$ by applying Theorem \ref{t:horb} to \eqref{eq:wigstat}. Before we do that, we first compute the Wigner distribution explicitly when $(x,\xi)=(0,0)$
\subsubsection{Proof of (4)}
Using \eqref{eq:wigl2} and the fact that $\lVert \varphi_N^{\gamma_0}\rVert_2=1$,
we have
\begin{align}
 W_{\varphi_N^{\gamma_0},\varphi_N^{\gamma_0}}(0,0) &=
                                                      \int_{\mathbb{R}^2}\varphi_N^{\gamma_0}(\tfrac{v}{2})\overline{\varphi_N^{\gamma_0}(-\tfrac{v}{2})}\frac{dv}{4\pi^2
                                                      \hbar^2}\nonumber \\
                                                    &=\frac{(-1)^N
                                                      (N+1)}{\pi^2E^2}\int_{\mathbb{R}^2}|\varphi_N^{\gamma_0}(v)|^2dv\\
  &=\frac{(-1)^N (N+1)}{\pi^2E^2}\label{eq:0}
\end{align}
where $E=(N+1)\hbar$. This shows (4) in the statement of the theorem.
\subsubsection{Proof of (3)} Now we apply the
method of stationary phase to \eqref{eq:wigstat3}. Recall
the complex phase defined in \eqref{eq:phase}:
$$\Psi(x,\xi,v) \coloneqq -i \log (\tfrac{x}{\sqrt{E}} + \tfrac{v}{2})
                         \overline{(\tfrac{x}{\sqrt{E}}  -
                         \tfrac{v}{2})}+i(\tfrac{|x|^2}{E}-1)+i\tfrac{|v|^2}{4}-\tfrac{1}{\sqrt{E}}\langle
                         v,\xi \rangle.$$ 
Computing the Wirtinger derivatives in $v$, we have the following lemma:
\begin{lem}\label{l:1} The critical point equation $D_v\Psi(x,\xi,v)=0$ is 
\begin{equation}\label{eq:1}
\begin{cases}
\partial_v \Psi  = 0\\
\overline{\partial}_v \Psi =0
\end{cases}
\iff\begin{cases}
i E +
\tfrac{i}{2} v\sqrt{E}\overline{(x  - \tfrac{v}{2}\sqrt{E})}  -\xi\overline{(x  - \tfrac{v}{2}\sqrt{E})}  = 0\\
- i E  + \tfrac{i}{2} \bar{v}\sqrt{E} (x + \tfrac{v}{2}\sqrt{E})  -
\overline{\xi}(x + \tfrac{v}{2}\sqrt{E}) =0
\end{cases},
\end{equation}
which has solutions only if $\langle x,\xi \rangle =0$.
\begin{proof}
Using $\langle v, \xi \rangle=v_1 \xi_1 + v_2 \xi_2 = \Re  v \overline{\xi}=  \tfrac{1}{2}(v \overline{\xi} +\overline{v}\xi)$, 
we have $\overline{\partial}_v (v_1 \xi_1 + v_2 \xi_2) =\frac{1}{2}\xi $ and
$\partial_v(v_1\xi_1+v_2\xi_2)=\frac{1}{2}\overline{\xi}$. Then
have
$$\overline{\partial}_v \Psi =  \frac{i}{2} \frac{\tfrac{x}{\sqrt{E}} + \tfrac{v}{2}}{ (\tfrac{x}{\sqrt{E}} + \tfrac{v}{2})\overline{(\tfrac{x}{\sqrt{E}} - \tfrac{v}{2})}} + \tfrac{i}{4 } v  - \tfrac{1}{2 \sqrt{E}}\xi  = 0, $$
and
$$\partial_v \Psi =- \frac{i}{2} \frac{\overline{\tfrac{x}{\sqrt{E}} - \tfrac{v}{2}} }{ (\tfrac{x}{\sqrt{E}} + \tfrac{v}{2})\overline{(\tfrac{x}{\sqrt{E}} - \tfrac{v}{2})}} + \tfrac{i}{4 } \bar{v}   -
\tfrac{1}{2\sqrt{ E}}\overline{\xi}  = 0.$$
Canceling and multiplying through, we have the desired system. Adding these two equations, we see
\begin{equation}\label{eq:2}
i\sqrt{E} [\Re(v \overline{x})- \Im(v \overline{\xi})]=2 \Re(\xi
\overline{x}).
\end{equation}
In particular, both sides of this equation are $0$. We then see
$\langle x,\xi \rangle=0.$
\end{proof}
\end{lem}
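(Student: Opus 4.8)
The statement is purely a Wirtinger-calculus computation, so the plan is to set it up carefully and then push it through. Identify $\mathbb{R}^2\cong\mathbb{C}$ in the variable $v=v_1+iv_2$; then $D_v\Psi=0$ is equivalent to the pair $\partial_v\Psi=0$, $\overline{\partial}_v\Psi=0$, where $\partial_v=\tfrac12(\partial_{v_1}-i\partial_{v_2})$ and $\overline{\partial}_v=\tfrac12(\partial_{v_1}+i\partial_{v_2})$. First I would record the elementary rules $\partial_v v=1$, $\overline{\partial}_v v=0$, $\partial_v\overline v=0$, $\overline{\partial}_v\overline v=1$, and, using $\langle v,\xi\rangle=\tfrac12(v\overline\xi+\overline v\xi)$, the identities $\partial_v\langle v,\xi\rangle=\tfrac12\overline\xi$ and $\overline{\partial}_v\langle v,\xi\rangle=\tfrac12\xi$.

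Next I would differentiate $\Psi$ from \eqref{eq:phase} term by term. Writing $z_\pm\coloneqq\tfrac{x}{\sqrt E}\pm\tfrac v2$, the Gaussian term $i\tfrac{|v|^2}{4}$ contributes $\tfrac i4\overline v$ to $\partial_v\Psi$ and $\tfrac i4 v$ to $\overline{\partial}_v\Psi$; the linear term $-\tfrac1{\sqrt E}\langle v,\xi\rangle$ contributes $-\tfrac{\overline\xi}{2\sqrt E}$ and $-\tfrac{\xi}{2\sqrt E}$; the term $i(\tfrac{|x|^2}{E}-1)$ drops out; and for the logarithmic term $-i\log(z_+\overline{z_-})$, using $\partial_v z_+=\tfrac12$ and $\overline{\partial}_v\overline{z_-}=-\tfrac12$ gives the contributions $-\tfrac i2\tfrac{\overline{z_-}}{z_+\overline{z_-}}$ and $\tfrac i2\tfrac{z_+}{z_+\overline{z_-}}$. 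Setting $\partial_v\Psi=0$ and $\overline{\partial}_v\Psi=0$ yields exactly the two displayed expressions in the lemma. Clearing the denominator $z_+\overline{z_-}$, cancelling the residual factor ($z_+$ in the $\overline{\partial}_v$ equation, $\overline{z_-}$ in the $\partial_v$ equation), substituting $z_\pm=\tfrac1{\sqrt E}(x\pm\tfrac v2\sqrt E)$, and multiplying through by $2E$ produces the polynomial system \eqref{eq:1}. Clearing the denominator is legitimate because $D_v\Psi$ is only ever evaluated where the principal-branch logarithm in \eqref{eq:phase} is defined, hence $z_\pm\neq0$; this is the one point that deserves an explicit word.

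To finish, I would add the two equations of \eqref{eq:1}. The $\pm iE$ cancel; after expanding $\overline{(x-\tfrac v2\sqrt E)}=\overline x-\tfrac{\overline v}{2}\sqrt E$, the two $v$-quadratic terms combine (the $|v|^2$ parts cancelling) into $i\sqrt E\,\Re(v\overline x)$, and the two $\xi$-linear terms combine into $-2\Re(\xi\overline x)-i\sqrt E\,\Im(v\overline\xi)$, using $\xi\overline v-\overline\xi v=-2i\Im(v\overline\xi)$. This is precisely \eqref{eq:2}: $i\sqrt E[\Re(v\overline x)-\Im(v\overline\xi)]=2\Re(\xi\overline x)$. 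Since the left side is purely imaginary and the right side is real, both vanish; in particular $\langle x,\xi\rangle=\Re(\xi\overline x)=0$, so \eqref{eq:1} has no solution unless $\langle x,\xi\rangle=0$.

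I do not expect a genuine obstacle: the argument is bookkeeping in Wirtinger calculus followed by a single algebraic addition. The only subtlety worth flagging is the justification for clearing the factor $z_+\overline{z_-}$, i.e.\ observing that one is always on the locus where $\Psi$ in \eqref{eq:phase} is defined so that $z_\pm$ are nonzero; beyond that the proof is routine.
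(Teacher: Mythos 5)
Your proposal is correct and follows essentially the same route as the paper: compute the Wirtinger derivatives $\partial_v\Psi$, $\overline{\partial}_v\Psi$ term by term, clear the factor $(\tfrac{x}{\sqrt E}+\tfrac v2)\overline{(\tfrac{x}{\sqrt E}-\tfrac v2)}$ to get the polynomial system \eqref{eq:1}, then add the two equations to obtain \eqref{eq:2} and conclude $\langle x,\xi\rangle=0$ since an imaginary quantity equals a real one. Your extra remark justifying the cancellation (the logarithm forces $z_\pm\neq 0$) is a small but welcome addition the paper leaves implicit.
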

We now characterize when the set $C_{x,\xi}\coloneqq \{v\in
\mathbb{R}^2 \mid D_v \Psi(x,\xi,v)=0,\Im
\Psi(x,\xi,v)=0\}$ is nonempty.
\begin{lem}\label{lem:crityes}
The set $C_{x,\xi}\coloneqq \{v\in
\mathbb{R}^2 \mid D_v \Psi(x,\xi,v)=0,\Im
\Psi(x,\xi,v)=0\}$ is nonempty if and only if $0 \leq |x| \leq
\sqrt{E}$ and $\xi=ix$. In this case, when $x \neq 0$,
$$C_{x,\xi}=\Big\{v_{\pm}\coloneqq \pm \tfrac{2}{|x|}ix\sqrt{1-\tfrac{|x|^2}{E}}\Big\}. $$
\end{lem}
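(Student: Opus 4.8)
The plan is to combine the two explicit facts already in hand: the vanishing-imaginary-part condition from Lemma \ref{l:pos}, which constrains the possible $v$'s, and the critical point system \eqref{eq:1} from Lemma \ref{l:1}, which then constrains $\xi$.

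First I would use Lemma \ref{l:pos}: membership of $v$ in $C_{x,\xi}$ requires $\Im\Psi(x,\xi,v)=0$, hence $|v|^2 = 4(1-\tfrac1E|x|^2)$ and $\langle x,v\rangle = 0$. Nonnegativity of $|v|^2$ forces $|x|\le\sqrt E$. Assuming $x\neq 0$, orthogonality of $v$ to $x$ in $\mathbb{R}^2\cong\mathbb{C}$ forces $v$ to be a real multiple of $ix$ (since $ix\perp x$ spans the orthogonal complement), and matching norms gives exactly $v\in\{v_+,v_-\}$ with $v_\pm=\pm\tfrac{2}{|x|}ix\sqrt{1-\tfrac{|x|^2}{E}}$; the two coincide (both equal $0$) when $|x|=\sqrt E$. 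Thus $C_{x,\xi}\subseteq\{v_+,v_-\}$ when $x\neq 0$ and $|x|\le\sqrt E$, while $C_{x,\xi}=\emptyset$ if $|x|>\sqrt E$. The case $x=0$ I would handle separately and quickly: there $\Im\Psi=0$ gives $|v|=2$, and the first equation of \eqref{eq:1} with $x=0$ collapses to $\xi\bar v=0$, forcing $\xi=0=ix$.

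Next I would substitute $v=v_\pm$ into \eqref{eq:1}. The key simplification is that $|x\pm\tfrac{v_\pm}{2}\sqrt E|^2 = |x|^2+(E-|x|^2)=E$. Writing $z:=x-\tfrac v2\sqrt E$, so that $\tfrac i2 v\sqrt E = i(x-z)$, the first equation of \eqref{eq:1} becomes $iE+ix\bar z-i|z|^2-\xi\bar z=0$; using $|z|^2=E$ this collapses to $(ix-\xi)\bar z=0$, and since $\bar z\neq 0$ we get $\xi=ix$. With $\tilde z:=x+\tfrac v2\sqrt E$ the second equation of \eqref{eq:1} reduces analogously to $(i\bar x+\bar\xi)\tilde z=0$, which holds automatically once $\xi=ix$. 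Conversely, if $\xi=ix$ with $0<|x|\le\sqrt E$, the same computation shows $v_\pm$ solves both equations of \eqref{eq:1}, and $\Im\Psi(x,ix,v_\pm)=0$ by Lemma \ref{l:pos} since $|v_\pm|^2=4(1-\tfrac{|x|^2}{E})$ and $\langle x,v_\pm\rangle=0$. This proves both implications and identifies $C_{x,\xi}=\{v_+,v_-\}$ in the nonempty case.

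The main obstacle is purely bookkeeping: keeping $x=x_1+ix_2$ as a complex number, carrying the $\pm$ signs in $v_\pm$ consistently, and handling the complex conjugations in \eqref{eq:1} correctly; the observation $|x\pm\tfrac{v_\pm}{2}\sqrt E|^2=E$ — the algebraic shadow of the $\Im\Psi=0$ constraint — is what makes the substitution painless rather than messy. Beyond that one only needs to dispatch the two degenerate cases $|x|=\sqrt E$ and $x=0$, each of which is immediate.
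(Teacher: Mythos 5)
Your proof is correct, and it takes a genuinely different route from the paper's. The paper leans on Lemma~\ref{l:1}'s corollary $\langle x,\xi\rangle = 0$ to split into the cases $x=0$ and $\xi = kix$ ($k\in\mathbb{R}$); in the second case it manipulates the system \eqref{eq:1} into the intermediate identity $\tfrac14|v|^2 + \tfrac{k}{E}|x|^2 - 1 = 0$ and then invokes the equality clause of Lemma~\ref{l:pos} to force $k=1$. You instead run Lemma~\ref{l:pos} \emph{first}: the equality condition already pins $v$ to the two-point set $\{v_\pm\}$ (or to the circle $|v|=2$ when $x=0$), so the $\xi=kix$ parametrization never enters. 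The payoff is the clean algebraic identity $|x\pm\tfrac{v_\pm}{2}\sqrt E|^2 = E$, which collapses the first equation of \eqref{eq:1} to $(ix-\xi)\bar z = 0$ and yields $\xi=ix$ in one line, with the second equation then holding automatically. A small bonus of your route is that it makes the converse (that $v_\pm$ actually lie in $C_{x,\xi}$ once $\xi=ix$) an explicit substitution check, whereas the paper disposes of the final claim somewhat tersely by pointing back to Lemma~\ref{l:pos}. The two arguments use the same ingredients — Lemma~\ref{l:pos} and \eqref{eq:1} — but in opposite order, and yours is the shorter computation; the paper's has the advantage of reusing the $\langle x,\xi\rangle=0$ observation it has already established in Lemma~\ref{l:1}.
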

\begin{proof}
By Lemma \ref{l:1}, $D_v \Psi(x,\xi,v)=0$ has solutions only if $\langle
x,\xi \rangle=0$. This means either $x = 0$ or $\xi = k i x$ for
some $k \in \mathbb{R}$. We consider each case separately.\\
{$\pmb {x=0}$}: In the case $x=0$, the top equation of \eqref{eq:1} becomes
\begin{equation}\label{eq:new}
iE -\tfrac{i}{4}E|v|^2 +\sqrt{E}\xi\overline{v}/2  = 0
\end{equation}
By the equality statement of Lemma \ref{l:pos}, we have $|v|=2$. With \eqref{eq:new}, this implies $\xi \overline{v}=0$ and hence $\xi=0$.\\ 
{ $\pmb{\xi=kix \text{ and } x\neq 0}$}: In the case $\xi=kix$ and $x \neq 0$, \eqref{eq:1} becomes
\begin{equation}\label{eq:4}
\begin{cases}
\tfrac{\sqrt{E}}{2}(v \overline{x}+k x \overline{v})  = \tfrac{E}{4}|v|^2+k|x|^2-E\\
\tfrac{\sqrt{E}}{2}(\overline{v}x+k \overline{x}v)=E-k|x|^2-\tfrac{E}{4}|v|^2=-( \tfrac{E}{4}|v|^2+k|x|^2-E).
\end{cases}
\end{equation}
Taking the complex conjugate of the second equation and
comparing with the first, we see 
\begin{equation}\label{eq:inter}
\tfrac{1}{4}|v|^2+\tfrac{k}{E}|x|^2-1=0
\end{equation}
The equality statement of Lemma \ref{l:pos} forces $k=1$, and we are done. The last statement of the lemma again follows from the equality statement of Lemma \ref{l:pos}.
\end{proof}
Note that Lemma \ref{l:pos} and Lemma \ref{lem:crityes} along
with Theorem 7.7.1 of \cite{H03} implies that
$W_{\varphi_N^{\gamma_0},\varphi_N^{\gamma_0}}(x,\xi)=O(N^{-\infty})$
when $|x| \notin [0,\sqrt{E}]$ or $\xi\neq ix$, which proves (3)
in the statement of the theorem.
\subsubsection{Proof of (2)}
We next compute the Hessian of $\Psi$ in $v$ at the critical points $v_{\pm}$.
\begin{lem}\label{lem:hess} Suppose $0<|x| \leq \sqrt{E}$ and $\xi=ix$. Then the determinant of the Hessian $D_v^2\Psi$
evaluated at the critical points $v_{\pm}=\pm \frac{2}{|x|}ix\sqrt{1-\frac{|x|^2}{E}}$ is 
$$\det D_v^2\Psi(x,ix,v_{\pm})=2\tfrac{|x|^4}{E^2}-2\tfrac{|x|^2}{E} \mp i(2\tfrac{|x|^2}{E}-1) \tfrac{|x|}{\sqrt{E}}\sqrt{1-\tfrac{|x|^2}{E}}$$
with
$$|\det D_v^2\Psi(x,ix,v_{\pm})|=\tfrac{|x|}{\sqrt{E}}\sqrt{1-\tfrac{|x|^2}{E}}.$$
\end{lem}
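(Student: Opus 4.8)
The plan is to express the complex Hessian $\det D_v^2\Psi$ in terms of the three second Wirtinger derivatives of $\Psi$ in $v$, to compute those in closed form from the first‑order formulas inside the proof of Lemma~\ref{l:1}, and then to evaluate at the critical points $v_\pm$ using the algebraic identity for the argument of the logarithm already exhibited inside the proof of Lemma~\ref{l:pos}. Throughout write $z_\pm\coloneqq\tfrac{x}{\sqrt E}\pm\tfrac v2$.

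First I would record the algebraic reduction of the determinant. Writing $\partial_v=\tfrac12(\partial_{v_1}-i\partial_{v_2})$, $\overline\partial_v=\tfrac12(\partial_{v_1}+i\partial_{v_2})$, so that $\partial_{v_1}=\partial_v+\overline\partial_v$ and $\partial_{v_2}=i(\partial_v-\overline\partial_v)$, and expanding $\det D_v^2\Psi=\partial_{v_1}^2\Psi\,\partial_{v_2}^2\Psi-(\partial_{v_1}\partial_{v_2}\Psi)^2$ in terms of the Wirtinger second derivatives (the mixed Wirtinger derivatives commute), everything collapses to
$$\det D_v^2\Psi\;=\;4\Big[(\partial_v\overline\partial_v\Psi)^2-\partial_v^2\Psi\cdot\overline\partial_v^2\Psi\Big].$$
Next, the first‑order expressions in the proof of Lemma~\ref{l:1} simplify to $\overline\partial_v\Psi=\tfrac{i}{2\overline{z_-}}+\tfrac i4 v-\tfrac{\xi}{2\sqrt E}$ and $\partial_v\Psi=-\tfrac{i}{2z_+}+\tfrac i4\bar v-\tfrac{\overline\xi}{2\sqrt E}$. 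Since $\overline{z_-}=\tfrac{\bar x}{\sqrt E}-\tfrac{\bar v}{2}$ depends only on $\bar v$ and $z_+$ only on $v$, one more differentiation is immediate and gives
$$\partial_v^2\Psi=\frac{i}{4z_+^2},\qquad\overline\partial_v^2\Psi=\frac{i}{4\,\overline{z_-}^2},\qquad\partial_v\overline\partial_v\Psi=\frac i4,$$
so that $\det D_v^2\Psi=\tfrac14\big((z_+\overline{z_-})^{-2}-1\big)$.

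It then remains only to evaluate $z_+\overline{z_-}$ at $v_\pm$. The proof of Lemma~\ref{l:pos} records the identity $z_+\overline{z_-}=\tfrac{|x|^2}{E}-\tfrac{|v|^2}{4}+\tfrac{i}{\sqrt E}\Im(v\bar x)$; for $\xi=ix$ and $v=v_\pm=\pm\tfrac{2}{|x|}ix\sqrt{1-\tfrac{|x|^2}{E}}$ from Lemma~\ref{lem:crityes} one has $\tfrac{|v_\pm|^2}{4}=1-\tfrac{|x|^2}{E}$ and $\Im(v_\pm\bar x)=\pm2|x|\sqrt{1-\tfrac{|x|^2}{E}}$, hence, with $s\coloneqq|x|^2/E\in(0,1]$,
$$z_+\overline{z_-}\big|_{v=v_\pm}=(2s-1)\pm2i\sqrt{s(1-s)}.$$
Because $(2s-1)^2+4s(1-s)=1$ this is a unit complex number, so $(z_+\overline{z_-})^{-2}$ equals its own conjugate $\overline{(z_+\overline{z_-})^2}=8s^2-8s+1\mp4i(2s-1)\sqrt{s(1-s)}$; substituting into $\det D_v^2\Psi=\tfrac14\big((z_+\overline{z_-})^{-2}-1\big)$ and restoring $s=|x|^2/E$ gives precisely the asserted value of $\det D_v^2\Psi(x,ix,v_\pm)$. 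For the modulus one can either compute directly that $|\det D_v^2\Psi|^2=s(1-s)\big[4s(1-s)+(2s-1)^2\big]=s(1-s)$, or write $z_+\overline{z_-}=e^{\pm i\phi}$ with $\cos\phi=2s-1$, whence $|\det D_v^2\Psi|=\tfrac14|e^{\mp2i\phi}-1|=\tfrac12|\sin\phi|=\sqrt{s(1-s)}=\tfrac{|x|}{\sqrt E}\sqrt{1-\tfrac{|x|^2}{E}}$.

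There is no conceptual obstacle; the work is entirely the Wirtinger bookkeeping of the second derivatives and propagating the two sign choices in $v_\pm$ through the last step. \emph{The} point that makes everything collapse is that the argument $z_+\overline{z_-}$ of the logarithm in \eqref{eq:phase} has modulus exactly $1$ at each critical point, so $\det D_v^2\Psi$ is merely $\tfrac14(\overline{\zeta^2}-1)$ for a unit complex number $\zeta$.
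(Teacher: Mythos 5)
Your proof is correct and follows essentially the same route as the paper: compute the second Wirtinger derivatives $\partial_v^2\Psi$, $\partial_v\overline\partial_v\Psi$, $\overline\partial_v^2\Psi$, reduce the determinant to $\tfrac14\big((z_+\overline{z_-})^{-2}-1\big)$, and exploit that $z_+\overline{z_-}$ is unimodular at $v_\pm$. The only cosmetic difference is that you pass from the real to the Wirtinger Hessian via the identity $\det D_v^2\Psi=4\big[(\partial_v\overline\partial_v\Psi)^2-\partial_v^2\Psi\,\overline\partial_v^2\Psi\big]$ up front, while the paper forms the complex Hessian matrix $H_\Psi$ first and records the conversion factor $\det D_v^2\Psi=-4\det H_\Psi$ at the end; these are the same bookkeeping in a different order.
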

\begin{proof}
Recall from \eqref{eq:phase} that the phase function is
$$\Psi(x,\xi,v) =  -i \log (\tfrac{x}{\sqrt{E}} + \tfrac{v}{2})
                         \overline{(\tfrac{x}{\sqrt{E}}  -
                         \tfrac{v}{2})}+i(\tfrac{|x|^2}{E}-1)+i\tfrac{|v|^2}{4}-\tfrac{1}{\sqrt{E}}\langle
                         v,\xi \rangle $$
with first derivatives
\begin{align*}
  \partial_v \Psi =- \frac{i}{2} \frac{1 }{ (\tfrac{x}{\sqrt{E}} + \tfrac{v}{2})} + \tfrac{i}{4 } \bar{v}   -
\tfrac{1}{2\sqrt{ E}}\overline{\xi},\qquad 
\overline{\partial}_v \Psi =  \frac{i}{2} \frac{1}{\overline{(\tfrac{x}{\sqrt{E}} - \tfrac{v}{2})}} + \tfrac{i}{4 } v  - \tfrac{1}{2 \sqrt{E}}\xi  = 0.
\end{align*}
The second derivatives are then
\begin{align*}
\partial_v^2\Psi =\frac{i}{4} \frac{1}{ (\tfrac{x}{\sqrt{E}} + \tfrac{v}{2})^2},\quad 
  \overline{\partial}_v\partial_v \Psi = \partial_v \overline{\partial}_v
                     \Psi= \tfrac{i}{4},\quad
  \overline{\partial}_v^2\Psi = \frac{i}{4} \frac{1}{ \overline{(\tfrac{x}{\sqrt{E}} - \tfrac{v}{2})}^2}.
\end{align*}
The complex Hessian is
\begin{equation}\label{eq:hess}
H_{\Psi} = \frac{i}{4}\begin{pmatrix}  (\tfrac{x}{\sqrt{E}} + \tfrac{v}{2})^{-2} & 1 \\
1 & \overline{(\tfrac{x}{\sqrt{E}} -\tfrac{v}{2})}^{-2} \end{pmatrix} \implies \det H_{\Psi}\Psi = \frac{1}{16}\bigg(1-\frac{1}{\big[(\frac{x}{\sqrt{E}} +\frac{v}{2})\overline{(\frac{x}{\sqrt{E}} -\frac{v}{2})}\big]^2}\bigg).
\end{equation}
Note that we have
\begin{align*}
 (\tfrac{x}{\sqrt{E}} +\tfrac{v_{\pm}}{2})\overline{(\tfrac{x}{\sqrt{E}} -\tfrac{v_{\pm}}{2})} 
  &= \big(\tfrac{x}{\sqrt{E}} \pm \tfrac{1}{|x|}ix\sqrt{1-\tfrac{|x|^2}{E}}\big)\overline{\big(\tfrac{x}{\sqrt{E}}  \mp  \tfrac{1}{|x|}ix\sqrt{1-\tfrac{|x|^2}{E}}\big)}\\
  &= \big(\tfrac{x}{\sqrt{E}} \pm \tfrac{1}{|x|}ix\sqrt{1-\tfrac{|x|^2}{E}}\big)\big(\tfrac{\overline{x}}{\sqrt{E}}  \pm  \tfrac{1}{|x|}i \overline{x}\sqrt{1-\tfrac{|x|^2}{E}}\big)\\
  &=2\tfrac{|x|^2}{E}-1 \pm 2i \tfrac{|x|}{\sqrt{E}}\sqrt{1-\tfrac{|x|^2}{E}},
\end{align*}
which has unit norm. Squaring, we have
\begin{align*}
\big[(\tfrac{x}{\sqrt{E}} +\tfrac{v_{\pm}}{2})\overline{(\tfrac{x}{\sqrt{E}} -\tfrac{v_{\pm}}{2})} \big]^2 &=\big[
               2\tfrac{|x|^2}{E}-1 \pm 2i \tfrac{|x|}{\sqrt{E}}\sqrt{1-\tfrac{|x|^2}{E}}\big]^2\\
  &=8\tfrac{|x|^4}{E^2}-8\tfrac{|x|^2}{E}+1 \pm i4(2\tfrac{|x|^2}{E}-1) \tfrac{|x|}{\sqrt{E}}\sqrt{1-\tfrac{|x|^2}{E}}
\end{align*}
The determinant of the complex Hessian is
$$\det H_{\Psi}(x,ix,v_{\pm})=\tfrac{1}{16}\Big(8\tfrac{|x|^2}{E}-8\tfrac{|x|^4}{E^2} \pm i4(2\tfrac{|x|^2}{E}-1) \tfrac{|x|}{\sqrt{E}}\sqrt{1-\tfrac{|x|^2}{E}}\Big)$$
with
$$|\det
H_{\Psi}(x,ix,v_{\pm})|=\tfrac{|x|}{4\sqrt{E}}\sqrt{1-\tfrac{|x|^2}{E}}.$$
Note that the determinant of the real Hessian $\det D_v^2\Psi$ differs from the determinant of the complex Hessian $\det H_{\Psi}$ by a factor of $-4$, and we are done
\end{proof}
Now we use Theorem \ref{t:horb} applied to
$W_{\varphi_N^{\gamma_0},\varphi_N^{\gamma_0}}(x,\xi)$ when
$0<|x|<\sqrt{E}$ and $\xi=ix$. Indeed, the only critical points
of the phase are $v_{\pm}$ (see Lemma
\ref{lem:crityes}). Theorem \ref{t:horb} guarantees
\begin{align*}
&\Big|\tfrac{1}{N^{5/3}}W_{\varphi_N^{\gamma_0},\varphi_N^{\gamma_0}}(x,\xi)\Big| \overset{\mathclap{\eqref{eq:wigstat3}}}{=}\bigg| \frac{N^{5/6}}{4\pi^3\sqrt{2\pi}E^{2}}  e^{1- \frac{1}{E}  |x|^2} \big(1+O(\tfrac{1}{N})\big) \int_{\mathbb{R}^2}  e^{- \frac{1}{4}   |v|^2}  
e^{- \frac{i}{\sqrt{E}}\langle v, \xi \rangle}e^{iN
                                                                   \Psi(x,\xi,v)}
                                                                                d v\bigg|\\
  &\leq 2\frac{N^{5/6}}{4\pi^3\sqrt{2\pi}E^{2}}  e^{1-
    \frac{1}{E}  |x|^2} \big(1+O(\tfrac{1}{N})\big)
    \Big(\tfrac{2\pi}{N\sqrt{|\det(D_v^2\Phi(x,ix,v_+))|}}|f(v_+)|+
    O(N^{-3/2})\Big)  \\
  &=O(N^{-1/6}),
\end{align*}
as desired for (2) in the theorem.
\subsubsection{Proof of (1)} This follows directly from Theorem \ref{t:scale} by setting $u=w_1=w_2=0$. An independent proof is possible with a careful reading of the coefficients of the proof of Proposition 3 on page 334
of \cite{S93}.\\
For general $\gamma$, we use \eqref{eq:meta2}, and we are done.
\subsection{Proof of Theorem \ref{t:scale}}\label{s:scale}
Again, we start by proving this result for \\ $\gamma_0(t)=\sqrt{E}( \cos t ,  \sin t, - \sin t, \cos t)$ and
$(x,\xi)=\sqrt{E}((1,0),(0,1))$, which in complex notation is
$\sqrt{E}(1,i)$. That is, we would like to compute the limit as $N \to \infty$ of
\begin{align*}
  &\frac{1}{N^{5/3}}W_{\varphi_N^{\gamma_0},\varphi_N^{\gamma_0}}\big[\sqrt{E}(1,i)(1+u)+w_1v_1+w_2v_2\big]\\
  &=\frac{1}{N^{5/3}}W_{\varphi_N^{\gamma_0},\varphi_N^{\gamma_0}}\big[\sqrt{E}+u\sqrt{E}+iw_1\sqrt{E}+w_2\sqrt{E},i\sqrt{E}+iu\sqrt{E}+w_1\sqrt{E}-iw_2\sqrt{E}\big].
\end{align*}
where $v_1 \coloneqq \sqrt{E}(i,1)$ and $v_2 \coloneqq \sqrt{E}(1,-i)$.
We define
\begin{align*}
  x_{u,w_1,w_2} &\coloneqq \sqrt{E}+u\sqrt{E}+iw_1\sqrt{E}+w_2\sqrt{E}\\
  \xi_{u,w_1,w_2}& \coloneqq i\sqrt{E}+iu\sqrt{E}+w_1\sqrt{E}-iw_2\sqrt{E}.
\end{align*}
By \eqref{eq:wigstat3}, we have
\begin{align}
\frac{1}{N^{5/3}}W_{\varphi_N^{\gamma_0},\varphi_N^{\gamma_0}}(x_{u,w_1,w_2}, \xi_{u,w_1,w_2})&=
                                                          \frac{N^{5/6}}{4\pi^3\sqrt{2\pi}E^{2}}
                                                  e^{1-
                                                          \frac{1}{E}
                                                          |x_{u,w_1,w_2}|^2}
                                                          \big(1+O(\tfrac{1}{N})\big)
  \nonumber\\
                                                        &\qquad                                                         
                                                          \int_{\mathbb{R}^2}
                                             f(v_1,v_2,u,w_1,w_2)e^{iN \Psi(v_1,v_2,u,w_1,w_2)}
                                                          dv_1dv_2.\label{eq:v1}
\end{align}
where
\begin{align}
\Psi(v_1,v_2,u,w_1,w_2)&\coloneqq 
      -i\log\big[(1+u+w_2)^2+w_1^2-\tfrac{v_1^2}{4}-\tfrac{v_2^2}{4}+i\big(v_2(1+u+w_2)-v_1w_1\big) \big]\nonumber \\
  & +i\big((1+u+w_2)^2+w_1^2-1+\tfrac{v_1^2}{4}+\tfrac{v_2^2}{4}\big)-\big(v_1w_1+v_2(1+u-w_2)\big)\\
           f(v_1,v_2,u,w_1,w_2) &\coloneqq\ e^{-\frac{1}{4}(v_1^2+v_2^2)-i[v_1w_1+v_2(1+u-w_2)]}.   \label{eq:amp}                         
\end{align}
By Lemma \ref{l:pos} and a standard calculation, we have
$$
  \Im\Psi(0)=0,\quad 
  \Im\Psi \geq 0,\quad
  \partial_{v_1}\Psi(0)=0,\quad
  \partial_{v_1}^2\Psi(0) =i.
  $$
This sets up an application of Theorem \ref{t:hor} where we integrate out $v_1$. If necessary, we shrink the neighborhood $K$ guaranteed by Theorem \ref{t:hor} in order to make the ideal generated by $\partial_{v_1}\Psi$ the
same as the ideal generated by $v_1+2iw_1$. One can do this
because
\begin{equation}\label{eq:der}
\partial_{v_1}\Psi=\tfrac{i}{2}(v_1+2iw_1)\bigg(\frac{1}{(1+u+w_2)^2+w_1^2-\tfrac{v_1^2}{4}-\tfrac{v_2^2}{4}+i\big(v_2(1+u+w_2)-v_1w_1\big)}+1\bigg),
\end{equation}
and the last term in the product vanishes only if
\begin{align*}
|v| &= 2\sqrt{(1+u+w_2)^2+w_1^2+1}.
\end{align*}
Since they are the same ideal, it is
easy to see that $h^0(v_2,u,w_1,w_2)=h(-2iw_1,v_2,u,w_1,w_2)$
where $h$ is either $f,\Psi$ or $\partial_{v_1}^2\Psi$.
The amplitude function
$f$ in \eqref{eq:amp} is not in $C_0^{\infty}(K)$, but it is a
Schwartz function in $(v_1,v_2)$ and the theorem still applies
by adding and subtracting a bump function whose support lies $K$. The remainder term is handled by Theorem 7.7.1 because by Lemma \ref{l:pos} and \eqref{eq:der} we have that $\Im \Psi>0$ or $\partial_{v_1}\Psi \neq 0$ outside of the support. Now applying
Theorem \ref{t:hor} to the $v_1$ integral of \eqref{eq:v1}, we
compute
\begin{align}
&\frac{1}{N^{5/3}}W_{\varphi_N^{\gamma_0},\varphi_N^{\gamma_0}}(x_{u,w_1,w_2}, \xi_{u,w_1,w_2})=
                                                          \frac{N^{5/6}}{4\pi^3\sqrt{2\pi}E^{2}}
                                                  e^{1-
                                                          \frac{1}{E}
                                                          |x_{u,w_1,w_2}|^2}
                                                          \big(1+O(\tfrac{1}{N})\big)
  \nonumber\\
                                                        &\qquad  \qquad \qquad \qquad \qquad \qquad \qquad \qquad     \qquad   \qquad                                                      
                                                          \int_{\mathbb{R}^2}
                                             f(v_1,v_2,u,w_1,w_2)e^{iN \Psi(v_1,v_2,u,w_1,w_2)}
                                                          dv \nonumber\\
  &=
                                                          \frac{N^{5/6}}{4\pi^3\sqrt{2\pi}E^{2}}
                                                  e^{1-
                                                          \frac{1}{E}
                                                          |x_{u,w_1,w_2}|^2}
                                                          \big(1+O(\tfrac{1}{N})\big)
  \nonumber\\
                                                        &\qquad  \qquad   \qquad                                                          
                                                          \bigg(\frac{\sqrt{2\pi}}{\sqrt{N}}\int_{\mathbb{R}}
                                             \frac{f^0(v_2,u,w_1,w_2)}{\sqrt{-i (\partial_{v_1}^2\Psi)^0(v_2,u,w_1,w_2)}}e^{iN \Psi^0(v_2,u,w_1,w_2)}dv_2+O(N^{-3/2})\bigg)
                                                          \nonumber\\
  &=
                                                          \frac{N^{5/6}}{4\pi^3\sqrt{2\pi}E^{2}}
                                                  e^{1-
                                                          \frac{1}{E}
                                                          |x_{u,w_1,w_2}|^2}
                                                          \big(1+O(\tfrac{1}{N})\big)
  \nonumber\\
                                                        &\qquad  \qquad   \qquad                                                          
                                                          \bigg(\frac{\sqrt{2\pi}}{\sqrt{N}}\int_{\mathbb{R}+2iw_2}
                                             \frac{f^0(v_2,u,w_1,w_2)}{\sqrt{-i (\partial_{v_1}^2\Psi)^0(v_2,u,w_1,w_2)}}e^{iN \Psi^0(v_2,u,w_1,w_2)}dv_2+O(N^{-3/2})\bigg)
                                                          \nonumber\\
  &=
                                                          \frac{N^{1/3}}{4\pi^3E^{2}}
                                                  e^{(N+1)(-u^2-2u-2w_1^2-2w_2^2)}
                                                          \big(1+O(\tfrac{1}{N})\big)\int_{\mathbb{R}}g(v_2,u)e^{iN \Phi(v_2,u)}dv_2
  \nonumber\\
                                                        &\qquad  \qquad   \qquad                                                          
                                                          +e^{1-
                                                          \frac{1}{E}
                                                          |x_{u,w_1,w_2}|^2}O(N^{-2/3})
                                                           \label{eq:v2}
\end{align}
where the penultimate equality follows from a change of contour, which we can do because of the decay of the integrand at infinity. For the last equality, we define
\begin{align}
g(v_2,u) \coloneqq &\ \frac{\sqrt{2}e^{-\frac{1}{4}v_2^2-iv_2(1+u)}}{\sqrt{\frac{1}{(1+u+i\frac{v_2}{2})^2}+1}}\label{eq:g}\\
\Phi(v_2,u) \coloneqq &\ -i\log\big[(1+u)^2+i
                            v_2(1+u)-\tfrac{v_2^2}{4}\big]+\tfrac{i}{4}v_2^2-v_2(1+u)\nonumber \\
  =&\ -i\log\big[(1+u+i\tfrac{v_2}{2})^2\big]+\tfrac{i}{4}v_2^2-v_2(1+u).  \nonumber                         
\end{align}
Note that if we view $\Phi$ as a function on 
$\mathbb{C}^2$, we see that it is holomorphic in a neighborhood
of $(0,0)$. Calculating derivatives of $\Phi$, we see
\begin{align}
\partial_{v_2}\Phi &=
                     \frac{1}{1+u+i\tfrac{v_2}{2}}+\tfrac{i}{2}v_2-(1+u) =-
    \frac{\tfrac{v_2^2}{4}+u(u+2)}{1+u+i\tfrac{v_2}{2}}\label{eq:dphi}\\
  \partial_{v_2}^2\Phi &=\frac{i}{2}\bigg(1-\frac{1}{(
                         1+u+i\tfrac{v_2}{2})^2}
                         \bigg)\label{eq:ddphi}\\
  \partial_{v_2}^3\Phi &=-\frac{1}{2}\frac{1}{( 1+u+i\tfrac{v_2}{2})^3 }\label{eq:dddphi}.
\end{align}
From \eqref{eq:dphi}, we see by the quadratic formula that
$\partial_{v_2}\Phi=0$ if and only if
$$v_2=v_{\pm}(u) \coloneqq \pm 2i\sqrt{u(u+2)}. $$
Additionally, note that $\partial_{v_2}\Phi(0,0)=\partial_{v_2}^2\Phi(0,0)=0$ and $\partial_{v_2}^3\Phi(0,0)=-1/2$, so we can apply Theorem \ref{t:CFU} (which follows from Theorem 2 of \cite{CFU56}) to see
\begin{align*}
e^{-N
  b(u)}\int_{\mathbb{R}}g(v_2,u)e^{iN \Phi(v_2,u)}dv_2&\sim
\tfrac{1}{N^{1/3}}\operatorname{Ai}(a(u)N^{2/3})\sum_{s=0}^{\infty}\mu_{0s}(u)N^{-s}
\\
  &\qquad \qquad \qquad +\tfrac{1}{N^{2/3}}\operatorname{Ai}'(a(u)N^{2/3})\sum_{s=0}^{\infty}\mu_{1s}(u)N^{-s}
\end{align*}
where $a$ and $b$ are functions satisfying
\begin{equation}\label{eq:T}
i\Phi(v_2,u)=\tfrac{1}{3}T(v_2,u)^3-a(u)T(v_2,u)+b(u)
\end{equation}
where $T$ is a function holomorphic in $v_2$ and $u$.  From \cite{CFU56}, the functions $a$ and
$b$ can be computed explicitly by
\begin{align}
  b(u)&=\tfrac{i}{2}\big[\Phi(v_{+}(u),u)+\Phi(v_{-}(u),u) \big]=u(u+2)\nonumber \\
  a(u)^{3/2}&=\tfrac{3i}{4}\big[\Phi(v_{+}(u),u)-\Phi(v_{-}(u),u)\big]\nonumber \\
  &=\tfrac{3}{2}\Big(2(u+1)\sqrt{u(u+2)}+\log
    \big[1+u-\sqrt{u(u+2)} \big]-\log
    \big[1+u+\sqrt{u(u+2)}\big]\Big) \nonumber\\
  &=3(u+1)\sqrt{u(u+2)}-3\log
    \big[1+u+\sqrt{u(u+2)}\big]\nonumber\\
      &=\begin{cases}
      3(u+1)i\sqrt{|u|(u+2)}-3i \operatorname{arccos}(1+u)&\text{ if }u<0\\
      3(u+1)\sqrt{u(u+2)}-3\operatorname{arccosh}(1+u)  &\text{ if }u\geq 0
      \end{cases} \label{eq:a(u)}
\end{align}
where the last line follows from the identities
\begin{align*}
  \operatorname{arccos}(z) = -i \log (z+i\sqrt{1-z^2}),\quad \operatorname{arccosh}(z) &= \log(z+\sqrt{z^2-1})
\end{align*}
where the principal branches are used for the logarithm and
square root functions. The branch in \eqref{eq:a(u)} is chosen
so as to make
\begin{equation}\label{eq:close}
a(u) \approx 2^{5/3}u \quad \text{for small }u,
\end{equation}
that is, we use the principal branch of the cube root (see the
top of page 603 in \cite{CFU56}). Now using \eqref{eq:v2}, we have
 \begin{align*}
\frac{1}{N^{5/3}}W_{\varphi_N^{\gamma_0},\varphi_N^{\gamma_0}}(x_{u,w_1,w_2}, \xi_{u,w_1,w_2})&\sim
                                                          \frac{\big(1+O(\tfrac{1}{N})\big)}{4\pi^3E^{2}}
                                                  e^{-2(N+1)(w_1^2+w_2^2)}
                                                         \bigg(\operatorname{Ai}(a(u)N^{2/3})\sum_{s=0}^{\infty}\mu_{0s}(u)N^{-s} 
  \nonumber\\
                                                        &\qquad  \qquad   \qquad                                                          
                                                          +\tfrac{1}{N^{2/3}}\operatorname{Ai}'(a(u)N^{1/3})\sum_{s=0}^{\infty}\mu_{1s}(u)N^{-s}\bigg)\\
 &\qquad  \qquad   \qquad                                                            +e^{1-
                                                          \frac{1}{E}
                                                          |x_{u,w_1,w_2}|^2}O(N^{-2/3}).
 \end{align*}
 Now we need to compute $\mu_{00}$. At the end of page 601 of
 \cite{CFU56}, we see
 \begin{equation}\label{eq:mu}
\frac{\mu_{00}(u)}{2\pi i}=\frac{1}{2}\Big(g(v_{+},u)\big(\partial_{v_2}T(v_+,u)\big)^{-1}+g(v_{-},u)\big(\partial_{v_2}T(v_-,u)\big)^{-1} \Big),
\end{equation}
so it suffices to compute $\partial_{v_2}T(v_{\pm},u)$. For
this, by Theorem 2 of \cite{CFU56}, we see
\begin{equation}\label{eq:relate}
T(v_{\pm},u)=\mp a(u)^{1/2}.
\end{equation}
Note that \eqref{eq:relate} cannot be $\pm a(u)^{1/2}$ for otherwise this would switch the sign of \eqref{eq:sr}, contradicting \eqref{eq:contra}.
Now we differentiate \eqref{eq:T} twice in $v_2$ to obtain
\begin{equation}\label{eq:nothing}
i\partial_{v_2}^2\Phi = \partial_{v_2}^2T\big(T^2-a(u) \big)+ \big( \partial_{v_2}T\big)^22T.
\end{equation}
Evaluating at $v_2=v_{\pm}$, we see
\begin{align*}
i\partial_{v_2}^2\Phi(v_{\pm},u) &=
                                 \partial_{v_2}^2T(v_{\pm},u)\big(T^2(v_{\pm},u)-a(u)
                                 \big)+ \big(
                                 \partial_{v_2}T(v_{\pm},u)\big)^22T(v_{\pm},u)
                                \\
  &\overset{\mathclap{\eqref{eq:relate}}}{=} \mp 2 \big(
                                 \partial_{v_2}T(v_{\pm},u)\big)^2a(u)^{1/2}.
\end{align*}
Using \eqref{eq:ddphi}, we see
\begin{equation}\label{eq:sr}
\big(\partial_{v_2}T(v_{\pm},u)\big)^2 = \frac{\frac{1}{(1+u+iv_{\pm}/2)^2}-1}{\mp 4 a(u)^{1/2}}=-\frac{\pm u(u+2) + (1+u)\sqrt{u(u+2)}}{ 2a(u)^{1/2}}.
\end{equation}
Since $\partial_{v_2}T$ is holomorphic, it suffices to find the
sign of the square root. For this, note
$\big(\partial_{v_2}T(0,0)\big)^2=-2^{-4/3}$ by setting $u=0$ in \eqref{eq:sr}. On the other hand,
by setting $u=0$ in \eqref{eq:T}, we have
\begin{equation*}
2\log(1+i\tfrac{v_2}{2})-\tfrac{i}{4}v_2^2+v_2=\tfrac{1}{3}T(v_2,0)^3.
\end{equation*}
Dividing both sides by $v_2^3$ and letting $v_2\to 0$, we see
\begin{equation}\label{eq:contra}
-\tfrac{i}{4}=\big(\partial_{v_2}T(0,0)\big)^3,
\end{equation}
which implies $\partial_{v_2}T(0,0)=2^{-2/3}i$, so the square root in \eqref{eq:sr} is the principal square root. In particular, we see $\mu_{00}(0)=2^{5/3}\pi$ (for an explicit formula for $\mu_{00}(u)$, see Section \ref{s:mu}). Lastly, we use Lemma \ref{lem:rotation} to rotate our initial point $(1,i)=((1,0),(0,1))$ around the orbit giving the desired
 result for $\gamma_0$. We then use \eqref{eq:meta2} to obtain the result for general $\gamma$. Note that the statement of the Theorem is independent of how we extend $(x,\xi)$ to an orthogonal basis (each vector with norm $\sqrt{2E}$) of $N_{(x,\xi)}\gamma$ with the vectors $v_1,v_2$ since the approximation only depends on the coefficients of these vectors, concluding the proof.
\section{Relation with the Eigenspace Projector}\label{s:p}
Suppose $\Pi_N$ is the eigenspace projector to the $\hbar (N+1)$-eigenspace $\mathcal{H}_N$ of \eqref{eq:harmos}. The orbital coherent states have the following reproducing formula.
\begin{prop}\label{p:proj} If $m$ is the Haar measure on $\operatorname{SU}(2)$ and $\mu$ is the metaplectic representation on $L^2(\mathbb{R}^2)$, then we have
\begin{equation}\label{eq:pro1}
\Pi_N
  =(N+1)\int_{\operatorname{SU}(2)}
  \big(\mu(U)\varphi_N^{\gamma_0} \big)\otimes \big(\mu(U)\varphi_N^{\gamma_0}\big)^{\dagger}\, dm(U).
\end{equation}
In particular, if $\nu$ is the normalized Fubini-Study measure on $\mathbb{C} \text{P}^1$, we have
\begin{equation}\label{eq:pro2}
\Pi_N=(N+1)\int_{\mathbb{C}\text{P}^1} \varphi_N^{\gamma}\otimes (\varphi_N^{\gamma})^{\dagger}\, d\nu(\gamma).
\end{equation}
\end{prop}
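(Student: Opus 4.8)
The plan is to read \eqref{eq:pro1} as an instance of Schur averaging for the restriction of the metaplectic representation to $\operatorname{SU}(2)$. Set
\[
A \coloneqq \int_{\operatorname{SU}(2)}\big(\mu(U)\varphi_N^{\gamma_0}\big)\otimes\big(\mu(U)\varphi_N^{\gamma_0}\big)^{\dagger}\,dm(U),
\]
a bounded operator on $L^2(\mathbb{R}^2)$: the integrand is a norm-continuous family of rank-one projections of operator norm $1$ and $\operatorname{SU}(2)$ is compact, so the (Bochner) integral converges. Since each $\mu(U)$ with $U\in\operatorname{SU}(2)\subset U(2)$ commutes with $\widehat{H}_{\hbar}$, it preserves the eigenspace $\mathcal{H}_N$; as $\varphi_N^{\gamma_0}\in\mathcal{H}_N$, every $\mu(U)\varphi_N^{\gamma_0}$ lies in $\mathcal{H}_N$, so $A=\Pi_N A\Pi_N$. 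First I would verify that $A$ intertwines $\mu|_{\operatorname{SU}(2)}$: for $V\in\operatorname{SU}(2)$, substituting $U\mapsto V^{-1}U$ and using left-invariance of the normalized Haar measure $m$ gives $\mu(V)A\mu(V)^{-1}=A$, i.e.\ $\mu(V)A=A\mu(V)$ on $L^2(\mathbb{R}^2)$ and hence on $\mathcal{H}_N$.

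Next I would invoke the key representation-theoretic fact that $\mu|_{\operatorname{SU}(2)}$ acts \emph{irreducibly} on the $(N+1)$-dimensional space $\mathcal{H}_N$: under the Bargmann transform $\mathcal{H}_N$ is the space of degree-$N$ homogeneous polynomials in two complex variables with the natural $\operatorname{SU}(2)$-action, i.e.\ $\operatorname{Sym}^{N}(\mathbb{C}^2)$, the irreducible spin-$N/2$ representation (see Chapter 4 of \cite{F89}). Schur's lemma then forces $A|_{\mathcal{H}_N}=c\,\operatorname{Id}_{\mathcal{H}_N}$ for a scalar $c$, so $A=c\,\Pi_N$. To identify $c$, take traces: $\operatorname{tr}A=\int_{\operatorname{SU}(2)}\lVert\mu(U)\varphi_N^{\gamma_0}\rVert_2^2\,dm(U)=1$ since $\mu(U)$ is unitary, $\lVert\varphi_N^{\gamma_0}\rVert_2=1$, and $m$ is normalized, while $\operatorname{tr}(c\,\Pi_N)=c\dim\mathcal{H}_N=c(N+1)$; hence $c=\tfrac{1}{N+1}$, which is \eqref{eq:pro1}. (Only that $\varphi_N^{\gamma_0}$ is a unit vector of $\mathcal{H}_N$ is used here.)

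Finally, \eqref{eq:pro2} follows by pushing the integral in \eqref{eq:pro1} forward along the quotient map $q\colon\operatorname{SU}(2)\to\mathbb{C}\text{P}^1$, $U\mapsto U\cdot\gamma_0$ (the projection of $\operatorname{SU}(2)$ onto the homogeneous space $\operatorname{SU}(2)/\operatorname{Stab}(\gamma_0)\cong\mathbb{C}\text{P}^1$). The integrand $\big(\mu(U)\varphi_N^{\gamma_0}\big)\otimes\big(\mu(U)\varphi_N^{\gamma_0}\big)^{\dagger}$ is constant on the fibers of $q$, because $\operatorname{Stab}(\gamma_0)$ acts on $\varphi_N^{\gamma_0}$ by a unimodular character --- this is precisely the statement that $\varphi_N^{\gamma}$ is well defined up to a phase, as noted after \eqref{eq:genphi}, and cf.\ the $e^{iN\theta}$ appearing in the proof of Lemma \ref{lem:rotation} --- and a phase cancels in a rank-one projection. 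Hence the integrand descends to $\gamma\mapsto\varphi_N^{\gamma}\otimes(\varphi_N^{\gamma})^{\dagger}$ on $\mathbb{C}\text{P}^1$, the pushforward $q_{*}m$ is an $\operatorname{SU}(2)$-invariant probability measure on $\mathbb{C}\text{P}^1$, hence equals the normalized Fubini--Study measure $\nu$ by uniqueness of the invariant measure on a homogeneous space, and substituting into \eqref{eq:pro1} gives \eqref{eq:pro2}.

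The only genuinely non-formal ingredient is the irreducibility of $\mu|_{\operatorname{SU}(2)}$ on each eigenspace $\mathcal{H}_N$ (together with the fact, automatic since $\operatorname{SU}(2)$ is simply connected, that this restriction is an honest representation with no residual metaplectic sign, so that $U\mapsto\mu(U)\varphi_N^{\gamma_0}$ is unambiguous); once that is in hand, the rest is the routine averaging-and-trace computation above, and the passage to $\mathbb{C}\text{P}^1$ is a change of variables.
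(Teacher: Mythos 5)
Your proposal is correct and follows essentially the same strategy as the paper: show the averaged operator commutes with the irreducible restriction of the metaplectic representation of $\operatorname{SU}(2)$ to $\mathcal{H}_N$ (the degree-$N$ homogeneous polynomials under the Bargmann/Fock transform), invoke Schur's lemma, and fix the scalar by a trace computation. The only cosmetic difference is the order: the paper proves \eqref{eq:pro2} directly by using the $\operatorname{SU}(2)$-invariance of the Fubini--Study measure and remarks that \eqref{eq:pro1} is the same, whereas you prove \eqref{eq:pro1} via left-invariance of Haar measure and then push forward along $\operatorname{SU}(2)\to\mathbb{C}\text{P}^1$ to get \eqref{eq:pro2}; your explicit remarks about the metaplectic sign being trivial on the simply connected $\operatorname{SU}(2)$ and the phase cancelling inside the rank-one projector tidy up points the paper handles somewhat implicitly.
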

\begin{proof}
We show \eqref{eq:pro2} as the proof of \eqref{eq:pro1} is the same. We do this by showing that the integral on the right-hand side of \eqref{eq:pro2} is a constant
times the identity using Schur's lemma (see Schur's lemma I on page 262 of \cite{F89}). Let $\mu :\operatorname{SU}(2) \to
\mathcal{U}(\mathcal{H}_N)$ be the subrepresentation of the metaplectic representation restricted to $\operatorname{SU}(2)$. It is a standard fact that this representation is irreducible since it is rotating homogeneous polynomials in two complex variables with constant degree in Fock space. Now for all $g \in \operatorname{SU}(2)$ and $f \in
\mathcal{H}_N,$ we have
\begin{align*}
\mu(g)\cdot \int_{\mathbb{C}\text{P}^1}\langle
  \varphi_N^{\gamma},f \rangle
  \varphi_N^{\gamma} d\nu(\gamma)&=
                                                 \int_{\mathbb{C}\text{P}^1}\langle
                                                 \varphi_N^{\gamma},f
                                                 \rangle
                                                 \mu(g)[
                                                 \varphi_N^{\gamma}]
                                                 d\nu(\gamma)\\
 &=
                                                 \int_{\mathbb{C}\text{P}^1}\langle
                                                 \varphi_N^{g^{-1}\cdot \gamma},f
                                                 \rangle
                                                 \mu(g)[
                                                 \varphi_N^{g^{-1}\cdot \gamma}]
                                                 d\nu(\gamma)\\
    &=
                                                 \int_{\mathbb{C}\text{P}^1}\langle
                                                 \mu(g^{-1})[\varphi_N^{\gamma}],f
                                                 \rangle                                                 
                                                 \varphi_N^{ \gamma}
                                                 d\nu(\gamma)\\
  &=
                                                 \int_{\mathbb{C}\text{P}^1}\langle
                                                 \varphi_N^{\gamma},\mu(g)[f]
                                                 \rangle                                                
                                                 \varphi_N^{\gamma}
                                                 d\nu(\gamma)
\end{align*}
where the first equality we use the Fubini-Tonelli theorem as $\mu(g)$ is
given by integral formulas starting on page 191 of
\cite{F89}. For the second line, we use the fact that the Fubini-Study measure is $\operatorname{SU}(2)$-invariant. The third line follows because 
$$\mu(g^{-1})[\varphi_N^{\gamma}]=\varphi_N^{g^{-1} \cdot
  \gamma}$$
up to a phase, which follows from \eqref{eq:genphi}. Schur's lemma gives
$$\Pi_N=c_N\int_{\mathbb{C}\text{P}^1}
\varphi_N^{\gamma} \otimes (\varphi_N^{\gamma})^{\dagger}\, d\nu(\gamma) $$
for some constant $c_N$, which can be computed as $N+1$ by taking the trace of both sides.
\end{proof}
Taking Wigner distributions of both sides of \eqref{eq:pro2} and
using the Fubini-Tonelli theorem, we see that
\begin{equation}\label{eq:wigcomp}
W_{\Pi_N}(x,\xi) \coloneqq  \int_{\mathbb{R}^2}\Pi_N (x+\tfrac{v}{2},x-\tfrac{v}{2})e^{-\frac{i}{\hbar
                                                             }\langle
                                         v ,
                                                             \xi
                                                             \rangle}\
                                                             \frac{dv}{4\pi^2\hbar^2}=(N+1)\int_{\mathbb{C}\text{P}^1} W_{ \varphi_N^{\gamma},\varphi_N^{\gamma}}(x,\xi)\, d\nu(\gamma).
\end{equation}
The Wigner distributions $W_{\Pi_N}$ were studied extensively in \cite{HZ20}. Theorem 1.4 of \cite{HZ20} proves interface Airy asymptotics for $W_{\Pi_N}$, and a consequence of their work is the following:
\begin{theo} For $u \in \mathbb{R}$ and $(x, \xi) \in \Sigma_E$, we have
$$\lim_{N \to \infty}\frac{1}{N^{5/3}}W_{\Pi_N}\big[(1+\tfrac{u}{(2N)^{\frac{2}{3}}})(x,\xi)\big]= \frac{1}{2^{4/3}\pi^2E^2}\operatorname{Ai}(2u).$$
\end{theo}
This compares nicely with Theorem \ref{t:scale}. In fact, if we set $w_1=w_2=0$ in Theorem \ref{t:scale}, the same conclusion is arrived up to a factor of $2$. Theorem \ref{t:scale} along with \eqref{eq:wigcomp} shows that the interface Airy asymptotics of $W_{\Pi_N}(x,\xi)$ in \cite{HZ20} are coming from the Airy asymptotics of $W_{\varphi_N^{\gamma},\varphi_N^{\gamma}}(x,\xi)$ in \ref{t:scale} for $\gamma$ in a small neighborhood of the orbit containing $(x,\xi)$.
\section{Generalizations to $d$ dimensions}\label{s:generalize}
The two dimensional coherent state that concentrates on
$\gamma_0$ in configuration space is
\begin{equation*}\label{eq:coherent}
 \varphi_N^{\gamma_0}(x_1,x_2)=\frac{(N+1)^{\frac{N+1}{2}}}{E^{\frac{N+1}{2}}\sqrt{\pi
    N!}}(x_1+ix_2)^Ne^{-\frac{(N+1)}{2E}|x|^2}=\frac{1}{\hbar^{\frac{N+1}{2}}\sqrt{\pi
    N!}}(x_1+ix_2)^Ne^{-\frac{1}{2\hbar}|x|^2}.
\end{equation*}
where $E=\hbar(N+1)$.
We can extend this coherent state
to a coherent state on dimension $d$ by multiplying by one-dimensional ground
states:
\begin{align*}
\varphi_{N,d}^{\gamma_0}(x_1,\ldots,x_d)\coloneqq & \ \varphi_N^{\gamma_0}(x_1,x_2)\frac{1}{(\hbar \pi)^{\frac{d-2}{4}}}e^{-\tfrac{1}{2\hbar}x_3^2-\cdots -\tfrac{1}{2\hbar}x_d^2}\\
  =&\ \frac{1}{(\hbar \pi)^{\frac{d-2}{4}}\hbar^{\frac{N+1}{2}}\sqrt{\pi
    N!}}(x_1+ix_2)^Ne^{-\frac{1}{2\hbar}|x|^2}.\\
  =&\ \frac{1}{\hbar^{N+\frac{d}{2}} \pi^{\frac{d}{4}}\sqrt{
    N!}}(x_1+ix_2)^Ne^{-\frac{1}{2\hbar}|x|^2}.
\end{align*}
Here we view $\gamma_0$ on the left hand side as $\gamma_0$
embedded in $d$-dimensional configuration space $\mathbb{R}^d$.
Recall recall that the Wigner distribution $W_{\varphi_N^{\gamma},\varphi_N^{\gamma}}:T^*\mathbb{R}^d \to \mathbb{R}$ of $\varphi_N^{\gamma}$ is
defined by
\begin{align*}
W_{\varphi_N^{\gamma},\varphi_N^{\gamma}}(x,\xi) \coloneqq &\
 \frac{1}{(2\pi)^d\hbar^d}\int_{\mathbb{R}^d}\varphi_N^{\gamma}(x+\tfrac{v}{2})\overline{\varphi_N^{\gamma}(x-\tfrac{v}{2})}e^{-\frac{i}{\hbar
                                                             }\langle
                                         v ,
                                                             \xi
                                                             \rangle}\
                                                             dv.
\end{align*}
Also recall that the one-dimensional ground state $g(x)=(\hbar
\pi)^{-\frac{1}{4}}e^{-\frac{1}{2\hbar}x^2}$ has Wigner
distribution
$$W_{g, g}(x,\xi)=\frac{1}{\pi \hbar}e^{-\frac{1}{\hbar}(x^2+\xi^2)}.$$
We then have
\begin{align}
W_{\varphi_{N,d}^{\gamma_0},
  \varphi_{N,d}^{\gamma_0}}(x,\xi)&=\frac{1}{(\pi\hbar)^{d-2}}e^{-\frac{1}{\hbar}(x_3^2+\xi_3^2+\cdots+x_d^2+\xi_d^2)}W_{\varphi_{N}^{\gamma_0},
                                    \varphi_{N}^{\gamma_0}}\big((x_1,x_2),(\xi_1,\xi_2)\big)\nonumber \\
  &=\bigg(\frac{N+\frac{d}{2}}{\pi E}\bigg)^{d-2}e^{-\frac{N+\frac{d}{2}}{E}(x_3^2+\xi_3^2+\cdots+x_d^2+\xi_d^2)}W_{\varphi_{N}^{\gamma_0},
                                    \varphi_{N}^{\gamma_0}}\big((x_1,x_2),(\xi_1,\xi_2)\big),\label{eq:gen}
\end{align}
where $E=\hbar (N+\frac{d}{2})$.
The non-constant Hamiltonian orbits of the $d$-dimensional
isotropic harmonic oscillator are still great circles lying in
two-dimensional planes, and we still have metaplectic covariance
\begin{equation}\label{eq:meta3}
W_{\mu(U)f, \mu(U)f}(x,\xi)=W_{f, f}\big(U^*(x,\xi)\big)
\end{equation}
where $f \in L^2(\mathbb{R}^d)$, $U \in \operatorname{SU}(d)$, and $\mu$ is the metaplectic representation on $L^2(\mathbb{R}^d)$. The non-constant Hamiltonian orbits can be similarly identified by $\mathbb{C}\text{P}^{d-1}$. Since $\operatorname{SU}(d)$ acts transitively on $\mathbb{C}\text{P}^{d-1}$ by rotation, for any non-constant Hamiltonian orbit $\gamma$ there exists a $U \in \operatorname{SU}(d)$ such that $U \cdot \gamma_0=\gamma$. We define (up to a phase) the coherent state centered at the Hamiltonian orbit $\gamma$ by
\begin{equation*}
\varphi_{N,d}^{\gamma} \coloneqq \mu(U)\varphi_{N,d}^{\gamma_0}.
\end{equation*}
Using \eqref{eq:meta3}, we have
\begin{equation}\label{eq:meta4}
W_{\varphi_{N,d}^{\gamma}, \varphi_{N,d}^{\gamma}}(x,\xi)=W_{\varphi_{N,d}^{\gamma_0},\varphi_{N,d}^{\gamma_0}}\big(U^*(x,\xi)\big).
\end{equation}
With \eqref{eq:gen} along with \eqref{eq:meta4}, we can prove
all of our results in dimension $d$. We state them below.
\begin{prop}\label{t:deltad} Let $a:T^*\mathbb{R}^d \to \mathbb{R}$ be a smooth function with exponential decay. Then in the limit $N\to \infty$ where $E=\hbar (N+d/2)$, we have
\begin{equation}\label{eq:me}
\int_{T^*\mathbb{R}^d}a(x,\xi)W_{\varphi_{N,d}^{\gamma},\varphi_{N,d}^{\gamma}}(x,\xi)\ dxd\xi \xrightarrow{N \to \infty} \frac{1}{2\pi}\int_0^{2\pi}a\big(\gamma(t)\big)\ dt.
\end{equation}
\end{prop}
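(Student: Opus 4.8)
The plan is to bootstrap from the two-dimensional Proposition~\ref{t:delta} using the dimensional reduction formula~\eqref{eq:gen} together with the metaplectic covariance~\eqref{eq:meta4}, in direct analogy with the two-dimensional argument.

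First, exactly as in the proof of Proposition~\ref{t:delta}, I would reduce to the model orbit $\gamma_0$, now viewed as lying in the $(x_1,x_2,\xi_1,\xi_2)$-subspace of $T^*\mathbb{R}^d$. Given $U\in\operatorname{SU}(d)$ with $U\cdot\gamma_0=\gamma$, set $b(x,\xi)\coloneqq a\bigl(U(x,\xi)\bigr)$, which is again smooth with exponential decay; by~\eqref{eq:meta4} and the substitution $(x,\xi)\mapsto U^*(x,\xi)$ one gets $\int_{T^*\mathbb{R}^d} b\,W_{\varphi_{N,d}^{\gamma_0},\varphi_{N,d}^{\gamma_0}}=\int_{T^*\mathbb{R}^d} a\,W_{\varphi_{N,d}^{\gamma},\varphi_{N,d}^{\gamma}}$, while $\frac{1}{2\pi}\int_0^{2\pi} b(\gamma_0(t))\,dt=\frac{1}{2\pi}\int_0^{2\pi} a(\gamma(t))\,dt$. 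So it suffices to treat $\gamma_0$.

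Next, split the variables as $x=(x',x'')$, $\xi=(\xi',\xi'')$ with $x',\xi'\in\mathbb{R}^2$ and $x'',\xi''\in\mathbb{R}^{d-2}$, and insert~\eqref{eq:gen} into the left-hand side of~\eqref{eq:me}. Since $\bigl(\tfrac{N+d/2}{\pi E}\bigr)^{d-2}=(\pi\hbar)^{-(d-2)}$ and $\tfrac{N+d/2}{E}=\hbar^{-1}$, the transverse factor $(\pi\hbar)^{-(d-2)}e^{-\frac1\hbar(|x''|^2+|\xi''|^2)}$ is an $L^1$-normalized Gaussian of width $\sim\sqrt{\hbar}\to 0$. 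Carrying out the $(x'',\xi'')$-integration first yields
\begin{equation*}
\int_{T^*\mathbb{R}^d} a\,W_{\varphi_{N,d}^{\gamma_0},\varphi_{N,d}^{\gamma_0}}\,dx\,d\xi=\int_{T^*\mathbb{R}^2} a_N(x',\xi')\,W_{\varphi_N^{\gamma_0},\varphi_N^{\gamma_0}}(x',\xi')\,dx'\,d\xi',
\end{equation*}
where $a_N(x',\xi')\coloneqq\int_{\mathbb{R}^{2(d-2)}} a\bigl((x',x''),(\xi',\xi'')\bigr)(\pi\hbar)^{-(d-2)}e^{-\frac1\hbar(|x''|^2+|\xi''|^2)}\,dx''\,d\xi''$. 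This $a_N$ is smooth with exponential decay whose rate and whose $C^k$-derivative bounds are \emph{uniform in $N$}, and $a_N\to a_\infty$ uniformly with $a_\infty(x',\xi')\coloneqq a\bigl((x',0),(\xi',0)\bigr)$; in particular $a_N(\gamma_0(t))\to a(\gamma_0(t))$ pointwise and boundedly, since the transverse components of $\gamma_0$ vanish.

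Finally, I would apply the two-dimensional result to the family $a_N$. Proposition~\ref{t:delta} is stated for a fixed symbol, so the cleanest route is to invoke its proof: the Bott--Morse stationary phase (Theorem~\ref{t:BM}) gives $\int_{T^*\mathbb{R}^2} b\,W_{\varphi_N^{\gamma_0},\varphi_N^{\gamma_0}}=\frac{1}{2\pi}\int_0^{2\pi} b(\gamma_0(t))\,dt+O(1/N)$ with the remainder controlled by finitely many weighted $C^k$-seminorms of $b$, which for $b=a_N$ are bounded uniformly in $N$. Hence $\int_{T^*\mathbb{R}^2} a_N\,W_{\varphi_N^{\gamma_0},\varphi_N^{\gamma_0}}=\frac{1}{2\pi}\int_0^{2\pi} a_N(\gamma_0(t))\,dt+O(1/N)$, and dominated convergence in $t$ together with $a_N(\gamma_0(t))\to a(\gamma_0(t))$ gives the claimed limit. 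Equivalently, one can fold the transverse Gaussian directly into the amplitude of~\eqref{eq:wigstat3} and rerun the Bott--Morse computation on $\mathbb{R}^{2d}$: the phase $\Psi$ does not involve $(x'',\xi'')$, the critical manifold $C$ is unchanged, and the transverse Gaussian integrates to $1$. The main obstacle is exactly this uniformity step --- passing from the fixed-symbol statement of Proposition~\ref{t:delta} to the $N$-dependent symbol $a_N$, i.e.\ interchanging the transverse Gaussian-concentration limit with the $N\to\infty$ stationary-phase limit; the rest is a routine transcription of the two-dimensional proof.
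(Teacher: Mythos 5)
Your proposal is correct and is essentially the route the paper has in mind: the paper's Section~\ref{s:generalize} offers no detailed proof of Proposition~\ref{t:deltad} beyond the remark that \eqref{eq:gen} and \eqref{eq:meta4} let one bootstrap the $d=2$ results, and your argument is exactly that bootstrap carried out. The reduction to $\gamma_0$ via \eqref{eq:meta4} mirrors the 2D proof verbatim, the identification of the transverse factor as an $L^1$-normalized Gaussian of width $\sqrt{\hbar}$ is right, and the passage to the averaged symbol $a_N$ with uniform $C^k$ and decay bounds is the correct way to fill the gap the paper leaves implicit. One small caution on your ``equivalently'' alternative: folding the Gaussian into the \emph{amplitude} of \eqref{eq:wigstat3} does not directly fit Theorem~\ref{t:BM}, since that factor concentrates with $N$ and so is not a fixed amplitude; the clean version of that route is to absorb $e^{-\frac{N+d/2}{E}(|x''|^2+|\xi''|^2)}$ into the \emph{phase} (it is $e^{iN\Psi_\perp}$ with $\Psi_\perp=\tfrac{i(N+d/2)}{NE}(|x''|^2+|\xi''|^2)$, up to replacing the large parameter $N$ by $N+d/2$), enlarging the normal bundle of $C$ by the $2(d-2)$ transverse directions; the extra $(2\pi/N)^{d-2}$ from stationary phase then cancels the extra prefactor $(\pi\hbar)^{-(d-2)}$, recovering the same limit. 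Either way the proof is sound and matches the paper's intent.
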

\begin{theo}\label{t:main24}
As $N \to \infty$, we have the following pointwise asymptotics:
\begin{itemize}
\item[(1)] If $(x,\xi)$ lies on the orbit $\gamma$, then we have
$$\frac{1}{N^{d-1/3}}W_{\varphi_{N,d}^{\gamma},\varphi_{N,d}^{\gamma}}(x,\xi)=\frac{1}{2^{1/3}\pi^dE^d}\operatorname{Ai}(0)+O(N^{-2/3}).$$
where $\frac{1}{2^{1/3}\pi^dE^d}\operatorname{Ai}(0)=\frac{\Gamma(1/3)12^{1/3}}{4\pi^{d+1}E^d\sqrt{3}}$
\item[(2)] If $(x,\xi)$ lies on the orbit $\gamma$, the for $0<t< 1$ we have 
$$\frac{1}{N^{d-1/3}}W_{\varphi_{N,d}^{\gamma},\varphi_{N,d}^{\gamma}}\big(t(x,\xi)\big)=O(N^{-1/6}).$$
\item[(3)] If $(x,\xi)\neq 0$ is not in the above case, we have
$$W_{\varphi_{N,d}^{\gamma},\varphi_{N,d}^{\gamma}}(x,\xi) =
O(N^{-\infty}).$$
\item[(4)] If $(x,\xi)=0$, then the Wigner distribution is explicitly $$W_{\varphi_{N,d}^{\gamma},\varphi_{N,d}^{\gamma}}(x,\xi)=\frac{(-1)^N}{\pi^2E^d}(N+1)^d.$$
\end{itemize}
\end{theo}
\begin{theo}\label{t:scale2} Suppose
$(x,\xi)$ lies on $\gamma$, and let $(x,\xi),v_1,\ldots,v_{2(d-1)}$ be an orthogonal basis (with each vector of norm $\sqrt{2E}$) for the normal space, $N_{(x,\xi)}\gamma$, of $\gamma$ at $(x,\xi)$. For $(u,w_1,\ldots,w_{2(d-1)}) \in \mathbb{R}^{2d-1}$
 in a sufficiently small neighborhood of $0$, we have the uniform asymptotic expansion
 \begin{align*}
   \frac{1}{N^{d-1/3}}W_{\varphi_{N,d}^{\gamma},\varphi_{N,d}^{\gamma}}&\big[(1+u)(x,\xi)+w_1v_1+\cdots+w_{2(d-1)}v_{2(d-1)}\big]\\
                                                                                  &=
                                                          \frac{ e^{-2(N+1)(w_1^2+\cdots+w_{2(d-1)}^2)}}{4\pi^{d+1}E^{d}}                                                 
                                                         \mu_{00}(u)\operatorname{Ai}(a(u)N^{2/3})+\operatorname{Ai}'(a(u)N^{2/3})\cdot O(N^{-2/3}) 
 \end{align*}
 where
 $$a(u)^{3/2}= \begin{cases}
      3(u+1)i\sqrt{|u|(u+2)}-3i \operatorname{arccos}(1+u)&\text{ if }u<0\\
      3(u+1)\sqrt{u(u+2)}-3\operatorname{arccosh}(1+u)  &\text{ if }u\geq 0
      \end{cases}$$
and $\mu_{00}(u)$ is a smooth function (explicitly given in the appendix) with $\mu_{00}(0)=2^{5/3}\pi$.
In particular, if we rescale $u \to \frac{u}{(2N)^{\frac{2}{3}}}$, $w_1 \to \frac{w_1}{(2N)^{\frac{1}{2}}}$, and $w_2 \to \frac{w_2}{(2N)^{\frac{1}{2}}}$, we have the pointwise limit
$$\lim_{N \to \infty}\frac{1}{N^{d-1/3}}W_{\varphi_{N,d}^{\gamma},\varphi_{N,d}^{\gamma}}\big[(1+\tfrac{u}{(2N)^{\frac{2}{3}}})(x,\xi)+\tfrac{w_1}{(2N)^{\frac{1}{2}}}v_1+\cdots+\tfrac{w_{2(d-1)}}{(2N)^{\frac{1}{2}}}v_{2(d-1)}\big]= \frac{e^{-(w_1^2+\cdots+w_{2(d-1)}^2)}}{2^{1/3}\pi^dE^d}\operatorname{Ai}(2u).$$
\end{theo}
Lastly, the same proof of Proposition \ref{p:proj} works for dimension $d$, and we have similar comparisons to results in \cite{HZ20}.
\section{Appendix}
\subsection{Airy Function}\label{s:airy}
The Airy function $\operatorname{Ai}(x)$ on $\mathbb{R}$ is
defined as the inverse Fourier transform of $\xi \mapsto
e^{i\xi^3/3}$. This implies $\operatorname{Ai}\in
\mathcal{S}'(\mathbb{R})$, but if we view the integral as an
improper Riemann integral on $[-L,L]$ with $L \to \infty$, we
see that $\operatorname{Ai} \in C^{\infty}(\mathbb{R})$ by an
integration by parts argument. So we define
$$\operatorname{Ai}(x) \coloneqq \lim_{L \to \infty}\frac{1}{2\pi}\int_{-L}^L
\cos(\xi^3/3+\xi x)d\xi.$$
This function satisfies the differential equation
$$\operatorname{Ai}''(x)-x\operatorname{Ai}(x)=0, $$
and it can be extended to an entire function on $\mathbb{C}$ using this differential equation.
\subsection{Stationary Phase Methods}\label{s:phase}
For the pointwise asymptotics, we use a consequence of Theorem
7.7.5 of \cite{H03}, which we state below
\begin{theo}\label{t:horb}
Suppose $f,\Psi$ are $C^{\infty}$ complex valued functions on $\mathbb{R}^d$. Suppose that
$$\Im \Psi \geq 0, \Im \Psi (x_0)=0, \Psi'(x_0)=0, \det
\Psi''(x_0) \neq 0, $$
with $\Psi' \neq 0$ in a neighborhood of $x_0$. Then
$$\int_{\mathbb{R}^d} f(x)e^{i N
  \Psi(x)}dx=\Big(\frac{2\pi}{N}\Big)^{\frac{d}{2}}\det(-i
    \Psi''(x_0) )^{-\frac{1}{2}}f(x_0)e^{i N \Psi(x_0)}+O(N^{-\frac{d}{2}-1}) $$
provided that $\operatorname{supp}f$ is in a sufficiently small neighborhood of $x_0$.
\end{theo}
Also, for the scaling asymptotics we use Theorem 7.7.12 of \cite{H03}, which is the above theorem with an extra variable (see below).
\begin{theo}\label{t:hor} Let $\Psi(x,y)$ be a
complex valued $C^{\infty}$ function in a neighborhood of
$(0,0)$ in $\mathbb{R}^{1+d}$, satisfying
$$\Im \Psi(0,0)=0,\quad \Im \Psi \geq 0, \quad \partial_x
\Psi(0,0)=0,\quad \partial_x^2\Psi(0,0)\neq 0, $$
and let $f \in C_0^{\infty}(K)$ where $K$ is a small
neighborhood of $(0,0)$ in $\mathbb{R}^{1+d}$. Then
$$\int_{\mathbb{R}} f(x,y)e^{i N
  \Psi(x,y)}dx=\frac{\sqrt{2\pi}}{\sqrt{N}\sqrt{-i
    \big(\partial_x^2\Psi \big)^0(y)}}f^0(y)e^{i N \Psi^0(y)}+O(N^{-3/2}) $$
where for functions $g(x,y)$ the notation $g^0(y)$ stands for a function of $y$ only which is in the same residue class modulo the ideal generated by $\partial_x\Psi$.
\end{theo}
The next theorem we reproduce is a consequence of Theorem 2 of \cite{CFU56},
which we use at a critical point in the proof of the scaling
asymptotics.
\begin{theo}\label{t:CFU} Let $\Psi(z,w)$ and $f(z,w)\in \mathcal{S}(\mathbb{C}^2)$ be
complex valued functions holomorphic in both variables in a neighborhood of
$(0,0)$ in $\mathbb{C}^2$, where $\Psi$ satisfies
$\partial_z\Psi(0,0)=\partial_z^2\Psi(0,0)=0$ and $\partial_z^3\Psi(0,0)\neq
0$. Further suppose that $\Psi$ has only two nondegenerate critical points in $z$,  $\pm z(w)$, that depend on $w$ and differ by a sign that coalesce to $0$ as $w$ goes to $0$ (i.e. $z(0)=0$). Then uniformly in $w$, we have the asymptotic formula
\begin{align*}
e^{-i N
                 b(w)}\int_{\mathbb{R}} f(z,w)e^{i N
  \Psi(z,w)}dz &\sim N^{-1/3}\operatorname{Ai}\big(a(w)N^{2/3}\big)\sum_{k=0}^{\infty}u_{0k}(w)N^{-k}\\
  &\qquad \qquad+N^{-2/3}\operatorname{Ai}'\big(a(w)N^{2/3}\big)\sum_{k=0}^{\infty}u_{1k}(w)N^{-k},
\end{align*}
provided that $w$ is in a sufficiently small neighborhood of $0$. The functions $a,b,u_{jk}$ are holomorphic on this neighborhood. 
\end{theo}
This theorem is the holomorphic version of Theorem 7.7.18 of \cite{H03}. The functions $a$ and $b$ are related to $\Psi$ by the equation
$$i\Psi(z,w)=\tfrac{1}{3}T(z,w)^3-a(w)T(z,w)+b(w)$$
where $T$ is a holomorphic function in $z$ and $w$. For more
information, see \cite{CFU56}. The last theorem is a stationary
phase approximation for Bott-Morse functions, and we use it in
the weak limit proof.
\begin{theo}[Stationary Phase for Bott-Morse functions]\label{t:BM}
Let $U \subset \mathbb{R}^d$ be an open set, and let $f,\Psi \in
C^{\infty}(U)$ be such that $\Im \Psi \geq 0$ in $U$ and
$\operatorname{supp}f \subset U$. We define
$$C \coloneqq \{u \in U \mid \Im \Psi(u)=0, \Psi'(u)=0\}.$$
Assume that $C$ is a smooth, compact, and connected manifold of
$\mathbb{R}^d$ of dimension $k$ such that for all $u \in C$ the
Hessian, $\Psi''(u)$, of $\Psi$ is nondegenerate on the normal
space $N_uC$ to $C$ at $u$. Then we have
$$\int_{\mathbb{R}^d}f(x)e^{i N
  \Psi(x)}dx=\Big(\frac{2\pi}{N}\Big)^{\frac{d-k}{2}}e^{i
  \Psi(u_0)}\int_C\frac{f(u)}{\sqrt{\det(-i
    \Psi''|_{N_uC})}}d\mu_C(u)+O(N^{-\frac{d-k}{2}-1}) $$
where $u_0 \in C$ is arbitrary, $\mu_{C}$ denotes the natural Riemannian metric on $C$, and the denominator of the integrand is the product of the square roots of the eigenvalues of $-i
    \Psi''|_{N_uC}$, chosen with positive real part.
    \end{theo}
For the proof of this theorem, see Theorem 29 on page 134 of \cite{CR12}
\subsection{Formula for $\mu_{00}(u)$}\label{s:mu}
From \eqref{eq:mu} in Section \ref{s:scale}, we have
$$\frac{\mu_{00}(u)}{2\pi
  i}=\frac{1}{2}\Big(g(v_{+},u)\big(\partial_{v_2}T(v_+,u)\big)^{-1}+g(v_{-},u)\big(\partial_{v_2}T(v_-,u)\big)^{-1}
\Big),$$
where $g$ is defined at \eqref{eq:g} by
$$g(v,u) \coloneqq \frac{\sqrt{2}e^{-\frac{1}{4}v_2^2-iv_2(1+u)}}{\sqrt{\frac{1}{(1+u+i\frac{v_2}{2})^2}+1}},$$
and \eqref{eq:sr} gives
$$\partial_{v_2}T(v_{\pm},u)= \sqrt{\frac{\frac{1}{(1+u+iv_{\pm}/2)^2}-1}{\mp 4 a(u)^{1/2}}}=i\sqrt{\frac{\pm u(u+2) + (1+u)\sqrt{u(u+2)}}{ 2a(u)^{1/2}}}.$$
Combing all of these equations, we have
\begin{align*}\label{eq:bigmu}
  \mu_{00}(u)&=2\pi|a(u)|^{1/4}\frac{e^{u(u+2)}}{\sqrt{2u(u+1)(u+2)}}\bigg|e^{2(u+1)\sqrt{u(u+2)}}\sqrt{\frac{-u(u+2)+(u+1)\sqrt{u(u+2)}}{1+u+\sqrt{u(u+2)}}}\\
             &\qquad+e^{-2(u+1)\sqrt{u(u+2)}}\sqrt{\frac{u(u+2)+(u+1)\sqrt{u(u+2)}}{1+u-\sqrt{u(u+2)}}} \bigg|.
\end{align*}

\newpage

\end{document}